\documentclass{article}


\usepackage{amssymb,amsmath,amsfonts, amsthm}    
\usepackage[english]{babel}
\usepackage[utf8x]{inputenc}
\usepackage{graphicx}
\usepackage{pstricks}
\usepackage{multido}
\usepackage{verbatim}
\usepackage[ruled,vlined]{algorithm2e}
\usepackage{enumerate}
\usepackage{dsfont}
\usepackage{tikz}
\usepackage{nomencl}
\usepackage{complexity}
\makenomenclature


\newcommand{\Proba}{{\mathbb P}}

\newtheorem{theorem}{Theorem}
\newtheorem{corollary}{Corollary}

\newtheorem{lemma}{Lemma}
\newtheorem{definition}{Definition}

{\everymath{\displaystyle\everymath{}}\array}%
{\endarray}



%

\def\hyann#1{}

\def\hdavid#1{}

\def\hpierre#1{}

\newcommand{\pro}[1]{{\mathbb P}\left(#1\right)}

\newcommand{\val}{\mbox{Val}}

\title{Finding Optimal Strategies of Almost Acyclic Simple Stochatic Games}
\author{  David Auger, Pierre Coucheney, Yann Strozecki }


\date{}

\begin{document}

\maketitle

\begin{center}
 
    {\tt \{david.auger, pierre.coucheney, yann.strozecki\}@uvsq.fr}

    \vskip .3cm

    PRiSM,  Université de Versailles Saint-Quentin-en-Yvelines
     
     Versailles, France
    
\end{center}

\begin{abstract}
    The optimal value computation for turned-based stochastic games with reachability objectives, also known as \emph{simple stochastic games}, is one of the few problems in $\NP \cap \coNP$ which are not known to be in $\P$. However,
 there are some cases where these games can be easily solved, as for instance when the underlying graph is acyclic. 
 In this work, we try to extend this tractability to several classes of 
 games that can be thought as "almost" acyclic. We give some fixed-parameter tractable or polynomial algorithms 
 in terms of different parameters 
 such as the number of cycles or the size of the minimal feedback vertex set.
 \end{abstract}


\begin{center}
   \bf   keywords: algorithmic game theory $\cdot$ stochastic games $\cdot$ FPT algorithms
\end{center}

\section*{Introduction}

A \emph{simple stochastic game}, SSG for short, is a zero-sum, two-player, turn-based version, of the more general 
\emph{stochastic games} introduced by Shapley \cite{shapley1953stochastic}. SSGs
were introduced by Condon \cite{condon1992complexity} and they provide a simple framework
that allows to study the algorithmic complexity issues underlying reachability objectives. A SSG is played by moving a pebble on a graph. Some vertices are divided between players MIN and MAX:
if the pebble attains a vertex controlled by a player then he has to move the pebble along
an arc leading to another vertex. Some other vertices are ruled by chance; typically
they have two outgoing arcs and a fair coin is tossed to decide where
the pebble will go. Finally, there is a special vertex named the $1$-sink, such that if the pebble reaches it player MAX wins, otherwise player MIN wins. 

Player MAX's objective is, given a starting vertex for the pebble, to maximize the probability of winning
against any strategy of MIN.
One can show that it is enough to consider stationary deterministic strategies for both players \cite{condon1992complexity}. 
Though seemingly simple since the number of stationary deterministic strategies is finite, the task of finding the pair of optimal strategies, or equivalently, of computing the so-called \emph{optimal values} of vertices, is not known to be in $\P$. 

SSGs are closely related to other games such as parity games or discounted payoff games to cite a few~\cite{andersson2009complexity}. Interestingly, those games provide natural applications in model checking of the modal $\mu$-calculus~\cite{stirling1999bisimulation} or in economics. While it is known that they can be reduced to simple stochastic games~\cite{DBLP:journals/corr/abs-1106-1232}, hence seemingly easier to solve, so far no polynomial algorithm are known for these games either. 

Nevertheless, there are some very simple restrictions for SSGs for which the problem of finding optimal strategies is tractable. 
Firstly, if there is only one player, the game is reduced to a Markov Decision Process (MDP) which can be solved by linear programming. 
In the same vein, if there is no randomness, the game can be solved in almost linear time~\cite{andersson2008deterministic}.

As an extension of that fact, there is a Fixed Parameter Tractable (FPT) algorithm, where the parameter is the number of average vertices~\cite{gimbert2008simple}. The idea is to get rid of the average vertices by sorting them according to a guessed order.
Finally, when the (graph underlying the) game is a directed acyclic graph (DAG), the values can be found in linear time by computing them backwardly from sinks.

Without the previous restrictions, algorithms running in exponential time are known. Among them, the Hoffman-Karp~\cite{hoffman1966nonterminating} algorithm proceeds by successively playing a local best-response named switch for one player, and then a global best-response for the other player. Generalizations of this algorithm have been proposed and, though efficient in practice, they fail to run in polynomial time on a well designed example~\cite{friedmann2009exponential}, even in the simple case of MDPs~\cite{fearnley2010exponential}. 
These variations mainly concern the choice of vertices to switch at each turn of the algorithm which is quite similar to the choice of pivoting in the simplex algorithm for linear programming. This is not so surprising since computing the values of an SSG can be seen as a generalization of solving a linear program.
The best algorithm so far is a randomized sub-exponential algorithm~\cite{ludwig1995subexponential} that is based on an adaptation of a pivoting rule used for the simplex. 
\subsection*{Our contribution}

In this article, we present several graph parameters such that, 
when the parameter is fixed, there is a polynomial time algorithm
to solve the SSG value problem. More precisely, the parameters we look at will quantify 
how close to a DAG is the underlying graph of the SSG, a case that is solvable in linear time. 
The most natural parameters that quantify the distance to a DAG would be one of the directed versions of the tree-width such as the DAG-width. Unfortunately, we are not yet able to prove a result even for SSG of bounded pathwidth.
In fact, in the simpler case of parity games the best algorithms for DAG-width and clique-width are polynomials
but not even FPT \cite{obdrvzalek2007clique,berwanger2012dag}.
Thus we focus on restrictions on the number of cycles and the size of a minimal feedback vertex set.

First, we introduce in Section~\ref{sec:max-acyclic}  a new class of games, namely  \emph{MAX-acyclic games}, which contains and generalizes the class of acyclic games. We show that the standard Hoffman-Karp algorithm, also known as strategy iteration algorithm, terminates in a linear number of steps for games in this class, yielding a polynomial algorithm to compute optimal values and strategies. It is known that, in the general case, this algorithm needs an exponential number of steps to compute optimal strategies, even in the simple case of Markov Decision Processes \cite{fearnley2010exponential,friedmann2009exponential}. 

Then, we extend in Section~\ref{sec:few_cycles} this result to games with very few cycles, by giving an FPT-algorithm where the parameter is the number of fork vertices which bounds the number of cycles. To obtain a linear dependance in the total number of vertices, we have to reduce our problem to several instances of acyclic games since we cannot even rely on computing the values in a general game. 


Finally, in Section~\ref{sec:feedback}, we provide an original method to ``eliminate'' vertices in an SSG. We apply it to obtain a polynomial time algorithm for the value problem on SSGs with a feedback vertex set of bounded size (Theorem~\ref{th:feedback}).

\section{Definitions and standard results}

Simple stochastic games are turn-based stochastic games with reachability objectives involving two players named MAX and MIN.
In the original version of Condon \cite{condon1992complexity}, all vertices except sinks have outdegree exactly two, and there are only two sinks, one with value $0$ and another with value $1$. Here, we allow more than two sinks with general rational values, and more than an outdegree two for positional vertices.

\begin{definition}[SSG]
    A simple stochastic game (SSG) is defined by a directed graph $G=(V,A)$, together with a partition of the vertex set $V$ in four parts $V_{MAX}$, $V_{MIN}$, $V_{AVE}$ and $V_{SINK}$. To every $x \in V_{SINK}$ corresponds a value $\val(x)$ which is a rational number in $\left[0,1\right]$. Moreover, vertices of $V_{AVE}$ have outdegree exactly $2$, while sink vertices have outdegree $1$ consisting of a single loop on themselves.
\end{definition}

In the article, we denote by $n_{M}, n_{m}$ and $ n_{a}$ the size of $V_{MAX}$, $V_{MIN}$ and $V_{AVE}$ respectively and by $n$ the size of $V$.
The set of \emph{positional vertices}, denoted $V_{POS}$, is $V_{POS}=V_{MAX} \cup V_{MIN}$. We now define strategies which we restrict to be stationary and pure, which turns out to be sufficient for optimality. 
Such strategies specify for each vertex of a player the choice of a neighbour. 

\begin{definition}[Strategy]
    A \emph{strategy} for player MAX is a map $\sigma$ from $V_{MAX}$ to $V$ such that
    \[ \forall x \in V_{MAX}, \quad (x,\sigma(x)) \in A.  \]
\end{definition}

Strategies for player MIN are defined analogously and are usually denoted by $\tau$. We denote $\Sigma$ and $T$ the sets of strategies for players MAX and MIN respectively.

\begin{definition}[play]
    A \emph{play} is a sequence of vertices $x_0, x_1, x_2, \dots$ such that for all $t \geq 0$, 
    \[ (x_t, x_{t+1}) \in A.\]
    Such a play is \emph{consistent} with strategies $\sigma$ and $\tau$, respectively for player MAX and player MIN, if for all $t \geq 0$,
    \[ x_t \in V_{MAX} \Rightarrow x_{t+1} = \sigma(x_t) \]
    and
    \[ x_t \in V_{MIN} \Rightarrow x_{t+1} = \tau(x_t). \]
\end{definition}

A couple of strategies $\sigma, \tau$ and an initial vertex $x_0 \in V$ define recursively
a random play consistent with $\sigma, \tau$ by setting:

\begin{itemize}

    \item if $x_t \in V_{MAX}$ then $x_{t+1} = \sigma(x_t)$;
    \item if $x_t \in V_{MIN}$ then $x_{t+1} = \tau(x_t)$;
    \item if $x_t \in V_{SINK}$ then $x_{t+1} = x_t$;
    \item if $x_t \in V_{AVE}$, then $x_{t+1}$ is one of the two neighbours of $x_t$,
        the choice being made by a fair coin, independently of all other random choices. 
\end{itemize}

Hence, two strategies $\sigma, \tau$, together with an initial vertex $x_0$ define a measure of probability $\Proba_{\sigma,\tau}^{x_0}$ on plays consistent with $\sigma, \tau$. Note that if a play contains a sink vertex $x$, then at every subsequent time the play stays in $x$. Such a play is said to \emph{reach} sink $x$. 
To every play $x_0, x_1, \dots$ we associate a value which is the value of the sink reached by the play if any, and $0$ otherwise. This defines a random variable $X$ once two strategies are fixed. We are interested in the expected value of this quantity, which we call the value of a vertex $x \in V$ under strategies $\sigma, \tau$: 

\[ \val_{\sigma,\tau} (x) = {\mathbb E}_{\sigma,\tau}^x \left( X \right) \]
where ${\mathbb E}_{\sigma, \tau}^x$ is the expected value under probability $\Proba_{\sigma, \tau}^x$.
The goal of player $MAX$ is to maximize this (expected) value, and the best he can ensure against a strategy $\tau$ is

\[ \val_{\tau} (x) = \max_{\sigma \in \Sigma} \val_{\sigma,\tau} (x) \]
while against $\sigma$ player MIN can ensure that the expected value is at most
\[ \val_{\sigma} (x) = \min_{\tau \in T} \val_{\sigma,\tau} (x). \]

Finally, the value of a vertex $x$, is the common value
\begin{equation}
  \label{eq.value_function}
 \val(x) = \max_{\sigma \in \Sigma} \min_{\tau \in T} \val_{\sigma,\tau} (x) = \min_{\tau \in T} \max_{\sigma \in \Sigma} \val_{\sigma,\tau} (x).
\end{equation}
The fact that these two quantities are equal is nontrivial, and it can be found for instance in \cite{condon1992complexity}. A pair of strategies $\sigma^*, \tau^*$ such that, for all vertices $x$,
\[\val_{\sigma^*,\tau^*} (x) = \val(x)\]
always exists and these strategies are said to be \emph{optimal strategies}. It is polynomial-time equivalent to compute optimal strategies or to compute the values of all vertices in the game, since values can be obtained from strategies by solving a linear system. Conversely
if values are known, optimal strategies are given by greedy choices in linear time (see \cite{condon1992complexity} and Lemma \ref{lemma:precision}). Hence, we shall simply write "solve the game" for these tasks.

We shall need the following notion:
\begin{definition}[Stopping SSG]
    A SSG is said to be \emph{stopping} if for every couple of strategies all plays eventually reach a sink vertex with probability $1$.
\end{definition}

 Condon \cite{condon1992complexity} proved that every SSG $G$ can be reduced in polynomial time into a stopping SSG $G'$ whose size is quadratic in the size of $G$, and whose values almost remain the same. 

 \begin{theorem}[Optimality conditions, \cite{condon1992complexity}]\label{th:local_optimality_conditions}
    Let $G$ be a stopping SSG. The vector of values $(\val(x))_{x \in V}$ is the only vector $w$ 
    satisfying:
    \begin{itemize}
        \item for every $x \in V_{MAX}$, $w(x) = \max\{w(y) \mid (x,y) \in A\}$;
        \item for every $x \in V_{MIN}$, $w(x) = \min\{w(y) \mid  (x,y) \in A\} $;
        \item for every $x \in V_{AVE}$ $w(x) = \frac12 w(x^1) +\frac12 w(x^2)$ where $x^1$ and $x^2$ are the two neighbours of $x$; 
        \item for every $x \in V_{SINK}$, $w(x) = \val(x)$.
    \end{itemize}
\end{theorem}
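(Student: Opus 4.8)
The plan is to reduce every assertion to the study of the finite Markov chain that remains once both players fix their strategies, and then to lean on the stopping hypothesis. The keystone is the following fixed-strategy lemma: for any pair $(\sigma,\tau)$, the vector $(\val_{\sigma,\tau}(x))_{x\in V}$ is the \emph{unique} solution of the \emph{linear} system obtained from the four bullet conditions by replacing the $\max$ at each MAX vertex with $w(x)=w(\sigma(x))$ and the $\min$ at each MIN vertex with $w(x)=w(\tau(x))$. Indeed, fixing $\sigma$ and $\tau$ collapses $G$ into a Markov chain; if $Q$ denotes its transition matrix restricted to the non-sink vertices and $b$ the vector collecting the one-step contributions of the sinks, the system is $w=Qw+b$. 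The stopping hypothesis states exactly that a sink is reached with probability $1$ from every vertex, so every non-sink state is transient, $Q^k\to 0$, the matrix $I-Q$ is invertible, and $w=(I-Q)^{-1}b$ is the unique solution; by definition of expected absorption values this solution is $\val_{\sigma,\tau}$.

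Write $\mathcal{O}$ for the operator on $[0,1]^V$ whose coordinates are the right-hand sides of the four conditions, so that the solutions sought are exactly the fixed points of $\mathcal{O}$. That \emph{some} fixed point exists can be seen directly: $\mathcal{O}$ is monotone for the coordinatewise order (maximum, minimum and average are monotone, sinks are constant), hence by the Knaster--Tarski theorem on the complete lattice $[0,1]^V$ its set of fixed points is nonempty. (Alternatively, one checks that $\val$ itself is a fixed point: for an AVE vertex the Markov property after one step gives the averaging identity, the sink case is immediate, and at a MAX vertex following an optimal strategy yields $\val(x)=\val(\sigma^*(x))\le\max_{(x,y)\in A}\val(y)$ while a one-step deviation toward any successor gives the reverse inequality, the MIN case being symmetric.)

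The heart of the proof is uniqueness, i.e. that any fixed point $w$ equals $\val$. From $w$ define \emph{greedy} strategies $\sigma_w,\tau_w$, where $\sigma_w(x)$ selects a successor maximizing $w$ at each MAX vertex and $\tau_w(x)$ a successor minimizing $w$ at each MIN vertex. Then $w$ solves the fixed-strategy system of $(\sigma_w,\tau_w)$, so the lemma gives $w=\val_{\sigma_w,\tau_w}$. Now let $T_{\sigma,\tau}$ be the monotone one-step averaging operator of the chain induced by $(\sigma,\tau)$ (with sinks pinned to their value); by the same transience argument $T_{\sigma,\tau}^k f\to \val_{\sigma,\tau}$ for every $f$ agreeing with $\val$ on the sinks. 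Fix an arbitrary MIN strategy $\tau$: at MAX, AVE and sink vertices the fixed-point conditions give $w(x)=(T_{\sigma_w,\tau}w)(x)$, whereas at a MIN vertex $w(x)=\min_y w(y)\le w(\tau(x))=(T_{\sigma_w,\tau}w)(x)$; hence $w\le T_{\sigma_w,\tau}w$, and iterating with monotonicity yields $w\le T_{\sigma_w,\tau}^k w\to\val_{\sigma_w,\tau}$, so $w\le\val_{\sigma_w,\tau}$ for every $\tau$. The symmetric computation with any MAX strategy $\sigma$ gives $w\ge\val_{\sigma,\tau_w}$. Combining these with the minimax identity~\eqref{eq.value_function},
\[ \val(x)\ \ge\ \min_{\tau}\val_{\sigma_w,\tau}(x)\ \ge\ w(x)\ \ge\ \max_{\sigma}\val_{\sigma,\tau_w}(x)\ \ge\ \val(x), \]
so all inequalities are equalities and $w=\val$; together with the existence of a fixed point established above, this shows $\val$ is the unique solution.

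The step I expect to be the crux is the use of the stopping hypothesis, which is invoked twice and is genuinely indispensable: it is what makes $I-Q$ invertible in the fixed-strategy lemma and what guarantees the convergence $T_{\sigma,\tau}^k f\to\val_{\sigma,\tau}$ used in the uniqueness argument. Without it a non-sink absorbing cycle can make $Q$ have spectral radius $1$, and the optimality system can acquire spurious solutions, which is precisely why Condon first reduces an arbitrary SSG to a stopping one.
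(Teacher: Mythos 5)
Your proof is correct, but there is nothing in the paper to compare it against: the statement is imported verbatim from Condon \cite{condon1992complexity} and the paper gives no proof of it, so any argument here is necessarily ``a different route.'' In substance, yours is the classical one and it is complete: the fixed-strategy lemma (under fixed $\sigma,\tau$ the stopping hypothesis makes all non-sink states transient, so $I-Q$ is invertible and the induced linear system has $\val_{\sigma,\tau}$ as its unique solution), existence of a fixed point of the optimality operator by Knaster--Tarski, and uniqueness by extracting greedy strategies $\sigma_w,\tau_w$ from a fixed point $w$, showing $w \le \val_{\sigma_w,\tau}$ for every $\tau$ and $w \ge \val_{\sigma,\tau_w}$ for every $\sigma$ via monotone iteration of the one-step operators, and closing the sandwich with the minimax identity~(\ref{eq.value_function}). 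A structural merit of your argument is worth pointing out: you never need to verify directly that $\val$ satisfies the local equations, since Knaster--Tarski supplies \emph{some} fixed point and your uniqueness argument forces every fixed point to equal $\val$; this neatly sidesteps the one genuinely delicate step. Indeed, your parenthetical ``alternative'' verification that $\val$ is a fixed point is the only fragile spot: the claim that a one-step deviation at a MAX vertex gives $\val(x) \ge \val(y)$ is not immediate when values are defined over \emph{stationary} strategies, because ``move to $y$, then play optimally'' need not be stationary and plays may return to $x$; making that rigorous requires essentially the machinery you already built. Since the main argument does not use the parenthetical, the proof stands as written.
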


If the underlying graph of an SSG is acyclic, then the game is stopping and the previous local optimality conditions yield a very simple way to compute values. Indeed, we can use backward propagation of values since all leaves are sinks, and the values of sinks are known. We naturally call these games \emph{acyclic SSGs}.\\

Once a pair of strategies has been fixed, the previous theorem enables us to see the values as solution of a system of linear equations. This yields the following lemma, which is an improvement on a similar result in \cite{condon1992complexity}, where the bound is $4^n$ instead of $6^{\frac{n_a}{2}}$.

\begin{lemma}\label{lemma:precision}
    Let $G$ be an SSG with sinks having rational values of common denominator $q$. 
    Then under any pair of strategies $\sigma,\tau$, the value $\val_{\sigma,\tau}(x)$ of any vertex $x$ can be computed in time $O(n_{a}^{\omega})$,
    where $\omega$ is the exponent of the matrix multiplication, and $n_a$
    the number of average (binary) vertices.
    Moreover, the value can  be written as a rational number $\frac{a}{b}$, with
    \[0 \leq a, b \leq 6^{\frac{n_a}{2}} \times q .\]
\end{lemma}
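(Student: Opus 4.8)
The plan is to fix the pair $\sigma,\tau$ and exploit that the game then collapses to a Markov chain whose only genuinely random states are the average vertices. First I would eliminate the positional vertices: under fixed strategies every vertex of $V_{MAX}\cup V_{MIN}$ has a single successor, so by Theorem~\ref{th:local_optimality_conditions} its value merely copies that of its successor. Following these deterministic arcs from each vertex until the first average or sink vertex is met (or, if a cycle of positional vertices is entered, recording value $0$ since no sink is ever reached) is a linear-time preprocessing that rewrites the optimality equations as a single equation per average vertex. This leaves a linear system in at most $n_a$ unknowns, which I solve in time $O(n_a^{\omega})$, the graph traversal being of lower order.

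Next I would make the system explicit. For an average vertex $v$ the equation reads $\val_{\sigma,\tau}(v)=\frac12 g_1+\frac12 g_2$, where each $g_k$ is either the value of another average vertex or a known constant (a sink value in $[0,1]$ or the value $0$ coming from a positional cycle). Clearing the factors $\frac12$ turns this into an integer system $Mw=e/q$ with $M=2I-B$, where $B\geq 0$ is integral with row sums at most $2$, and $e$ is an integer vector with entries in $[0,2q]$. To guarantee a unique solution I restrict to the average vertices from which a sink is reached with positive probability, the remaining ones having value $0$; on this set the substochastic matrix $B/2$ has spectral radius strictly below $1$, so $M$ is invertible (equivalently, one may assume $G$ stopping via the reduction recalled before the lemma).

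For the precision bound I would apply Cramer's rule: pulling the factor $1/q$ out of the replaced column gives $\val_{\sigma,\tau}(v_i)=\det\tilde M/(q\det M)$ with $\tilde M,M$ integral. Setting $b=q\lvert\det M\rvert$, the key estimate is $\lvert\det M\rvert\le 6^{n_a/2}$, obtained from Hadamard's inequality: each row of $M$ carries a $2$ on the diagonal and entries $-B_{ij}$ off it, so its squared Euclidean norm is at most $6$ in every case except a row with a single off-diagonal $-2$, i.e.\ an average vertex both of whose reduced branches point to the same average vertex $u$. Such a row forces $\val_{\sigma,\tau}(v)=\val_{\sigma,\tau}(u)$, so I identify $v$ with $u$ and repeat; after these identifications every row has squared norm at most $6$, and the smaller system only improves the bound. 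Finally the numerator is free: since $\val_{\sigma,\tau}(v_i)\in[0,1]$ and $\lvert\det\tilde M\rvert=\val_{\sigma,\tau}(v_i)\,b$, we get $0\le a:=\lvert\det\tilde M\rvert\le b\le 6^{n_a/2}\,q$, which is exactly the claimed form $\tfrac{a}{b}$.

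The main obstacle is the determinant bound, and specifically getting the base down to $6$ rather than, say, $8^{n_a/2}$ from a naive Hadamard estimate: this forces the case analysis on the rows of $2I-B$ and the merging of vertices whose two branches collapse to a common target, together with the transience/stopping argument ensuring $\det M\neq 0$. Once the denominator is controlled, observing that the value lies in $[0,1]$ disposes of the numerator bound immediately.
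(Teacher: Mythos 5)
Your proof is correct and follows essentially the same route as the paper's: eliminate positional vertices (plus the zero-value/non-absorbing ones, which also yields nonsingularity of the system via transience), reduce to the integer system $2(I-A)z=2b$ on the average vertices, and apply Cramer's rule together with Hadamard's inequality, your merging of an average vertex whose two branches hit the same target being exactly the paper's step of ``forgetting'' such vertices so that every row has squared norm at most $6$. The only cosmetic difference is that the paper makes the numerator bound implicit (it follows, as you note, from the value lying in $[0,1]$), so there is no gap to report.
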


\begin{proof}
    We sketch the proof since it is standard.
    First, one can easily compute all vertices $x$ such that
    \[ \val_{\sigma, \tau} (x) = 0.\]
    Let $Z$ be the set of these vertices. Then: 
    \begin{itemize}
        \item all AVE vertices in $Z$ have all their neighbours in $Z$;
        \item all MAX (resp. MIN) vertices $x$ in $Z$ are such that $\sigma(x)$ (resp. $\tau(x)$) is in $Z$. 
    \end{itemize}

    To compute $Z$, we can start with the set $Z$ of all vertices except sinks with positive value and iterate the following
    \begin{itemize}
        \item if $Z$ contains an AVE vertex $x$ with a neighbour out of $Z$, remove $x$ from $Z$;
        \item if $Z$ contains a MAX (resp. MIN) vertex $x$ with $\sigma(x)$ (resp. $\tau(x)$) out of $Z$, remove $x$ from $Z$.
    \end{itemize} 
    This process will stabilize in at most $n$ steps and compute the required set $Z$. 
    Once this is done, we can replace all vertices of $Z$ by a sink with value zero, obtaining
    a game $G'$ where under $\sigma, \tau$, the values of all vertices will be unchanged.

    Consider now in $G$' two corresponding strategies $\sigma, \tau$ (keeping the same names to simplify) and a positional vertex $x$. 
    Let $x'$ be the first non positional vertex that can be reached from $x$ under strategies $\sigma, \tau$. Clearly, $x'$ is well defined and \[\val_{\sigma,\tau}(x) = \val_{\sigma, \tau}(x').\]
    This shows that the possible values under $\sigma, \tau$ of all vertices are the values of average and sink vertices.
    The same is true if one average vertex has its two arcs towards the same vertex, thus we can forget those also.    
    The value of an average vertex being equal to the average value of its two neighbours, we see that we can write a system
   
    \begin{equation} \label{z=az+b}
    z = Az +b 
    \end{equation}
    where
    \begin{itemize}
        \item $z$ is the $n_{a}$-dimensional vector containing the values of average vertices
        \item $A$ is a matrix where all lines have at most two $\frac12$ coefficients, the rest being zeros
        \item $b$ is a vector whose entries are of the form $0$, $\frac{p_i}{2q}$ or $\frac{p_i + p_j}{2q}$, corresponding to transitions from average vertices to sink vertices.
    \end{itemize}
    
    Since no vertices but sinks have value zero, it can be shown that this system has a unique solution, i.e. matrix $I-A$ is nonsingular, where $I$ is the $n_{a}$-dimensional identity matrix. We refer to \cite{condon1992complexity} for details, the idea being that since in $n-1$ steps there is a small probability of transition from any vertex of $G'$ to a sink vertex, the sum of all coefficients on a line of $A^{n-1}$ is strictly less than one, hence the convergence of
    \[ \sum_{k \geq 0} A^k = (I - A)^{-1}.\]

    Rewriting (\ref{z=az+b}) as  
\[ 2(I-A) z= 2b,\]
we can use Cramer's rule to obtain that the value $z_v$ of an average vertex $v$ is 
\[ z_v = \frac{\det B_v}{ \det 2(I-A)} \]
where $B_v$ is the matrix $2(I-A)$ with the column corresponding to $v$ replaced by $2b$. Hence
by expanding the determinant we see that $z_v$ is of the form
\[ \frac{1}{\det 2(I-A)} \sum_{w \in V_{AVE}} \pm 2b_w \det(2(I-A)_{v,w}) \] 
where $2(I-A)_{v,w}$ is the matrix $2(I-A)$ where the line corresponding to $v$ and the column
corresponding to $w$ have been removed.

Since $2b_w$ has either value $0$, $\frac{p_i}{q}$ or $\frac{p_i + p_j}{q}$
for some $1 \leq i,j \leq n_{a}$, we can write the value of $z_v$ as a fraction of integers
\[ \frac{ \sum_{w \in v} \pm 2b_w q \cdot \det(2(I-A)_{v,w})}{   \det 2(I-A) \cdot q}\] 
It remains to be seen, by Hadamard's inequality, that since the nonzero entries of $2(I-A)$
on a line are a $2$ and at most two $-1$, we have \[\det 2(I-A) \leq 6^{\frac{n_a}{2}} ,\] which concludes the proof.

\end{proof}

The bound $6^{\frac{n_a}{2}}$ is almost optimal. Indeed a caterpillar tree of $n$ average vertices connected to the $0$ sink 
except the last one, which is connected to the $1$ sink, has a value of $\frac{1}{2}^n$ at the root.
Note that the lemma is slightly more general (rational values on sinks) and the bound a bit better ($\sqrt{6}$ instead of $4$) than what is usually found in the literature.

In all this paper, the complexity of the algorithms will be given in term of number of arithmetic 
operations and comparisons on the values as it is customary. The numbers occurring in the algorithms are rationals of value at most exponential in the number of vertices in the game, therefore the bit complexity is increased by at most an almost linear factor.

\section{MAX-acyclic SSGs}\label{sec:max-acyclic}

  In this section we define a class of SSG that generalize acyclic SSGs and still
  have a polynomial-time algorithm solving the value problem.
  
  A cycle of an SSG is an \emph{oriented} cycle of the underlying graph.

  \begin{definition}
  We say that an SSG is \emph{MAX-acyclic} (respectively MIN-acyclic) if from any MAX vertex $x$ (resp. MIN vertex),
  for all outgoing arcs $a$ but one, all plays going through $a$ never reach $x$ again. 
  \end{definition}

  Therefore this class contains the class of acyclic SSGs and
  we can see this hypothesis as being a mild form of acyclicity. From now on, we will stick to 
  MAX-acyclic SSGs, but any result would be true for MIN-acyclic SSGs also.
  There is a simple characterization of MAX-acyclicity in term of the structure of the underlying graph.
  
  \begin{lemma}
   An SSG is  MAX-acyclic if and only if every MAX vertex has at most one 
  outgoing arc in a cycle.
  \end{lemma}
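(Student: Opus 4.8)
The plan is to reduce the statement to a purely graph-theoretic equivalence handled one arc at a time. Fix a MAX vertex $x$ and an outgoing arc $a=(x,y)$. I would first observe that every vertex of $G$ has outdegree at least one (positional vertices carry a strategy choice, AVE vertices have outdegree two, and sinks loop on themselves), so that a play is nothing but an infinite walk in the underlying digraph and, conversely, every finite walk can be prolonged into a play. Consequently the condition ``all plays through $a$ never reach $x$ again'' is equivalent to the finite statement ``no walk using the arc $a$ returns to $x$ afterwards'', which is in turn equivalent to ``there is no walk from $y$ to $x$ in $G$''. Indeed, from a walk $y=w_0,\dots,w_k=x$ one builds the play $x,y=w_0,\dots,w_k=x,\dots$ (extended arbitrarily) that uses $a$ and revisits $x$; conversely, if a play uses $a$ at time $t$ and next returns to $x$ at time $s>t$, the segment $x_{t+1},\dots,x_s$ is a walk from $y$ to $x$.

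Next I would establish the core equivalence: the arc $a=(x,y)$ lies on some oriented cycle if and only if there is a walk from $y$ to $x$. The forward direction is immediate, since a cycle through $a$ is exactly a closed walk $x,y,\dots,x$ whose tail is a walk from $y$ to $x$. For the converse I would take a walk from $y$ to $x$, extract from it a simple path from $y$ to $x$ (by deleting the detours between repeated vertices), and close it with the arc $a$; because the path is simple and its endpoints are precisely $x$ and $y$, this yields a simple cycle containing $a$. Chaining this with the previous reduction gives the key fact: the arc $a$ lies on a cycle if and only if some play through $a$ reaches $x$ again.

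With this equivalence in hand, the lemma follows by a counting argument over the outgoing arcs of each MAX vertex. The definition of MAX-acyclicity states that for every MAX vertex $x$, at most one outgoing arc fails the property ``all plays through it never reach $x$ again''; by the equivalence just proved, this failing set is exactly the set of outgoing arcs of $x$ lying on a cycle. Hence MAX-acyclicity is equivalent to the assertion that each MAX vertex has at most one outgoing arc on a cycle, and quantifying over all MAX vertices delivers both implications simultaneously.

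The only genuinely delicate step is the converse of the core equivalence, namely the clean extraction of a simple cycle \emph{through the prescribed arc} $a$ from an arbitrary returning walk; I would be careful that the shortcutting argument keeps $a$ on the resulting cycle rather than accidentally discarding it, and I would dispatch the degenerate case $y=x$ (a self-loop, which is itself a cycle and gives a play returning to $x$ immediately) separately.
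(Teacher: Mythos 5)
Your proof is correct. Note that the paper states this lemma without any proof at all (it is treated as an immediate consequence of the definitions), and your argument is precisely the routine one being implicitly invoked: the equivalence, for an arc $a=(x,y)$, between \emph{some play through $a$ returns to $x$}, \emph{some walk from $y$ to $x$ exists}, and \emph{$a$ lies on an oriented cycle}, with the only points needing care being the prolongation of finite walks into (infinite) plays, the extraction of a simple cycle through the prescribed arc, and the self-loop case --- all of which you handle. Nothing is missing.
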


  Let us specify the following notion.

  \begin{definition}
      We say that an SSG is \emph{strongly connected} if the underlying directed graph, once sinks are removed,
      is strongly connected.
  \end{definition}

\begin{lemma}
    Let $G$ be a MAX-acyclic, strongly connected SSG. Then for each MAX $x$, all neighbours
    of $x$ but one must be sinks.
\end{lemma}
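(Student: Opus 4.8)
The plan is to use the alternative characterization of MAX-acyclicity stated just above, namely that every MAX vertex has at most one outgoing arc lying on a cycle, and to combine it with strong connectivity to force every non-sink neighbour of a MAX vertex to sit on such a cycle. Concretely, I would fix a MAX vertex $x$ and argue the contrapositive of the desired conclusion: I show that any neighbour $y$ of $x$ that is \emph{not} a sink gives rise to an outgoing arc $(x,y)$ lying on a cycle. Since at most one outgoing arc of $x$ can lie on a cycle, at most one neighbour of $x$ can fail to be a sink, which is exactly the statement.

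First I would observe that both $x$ and any non-sink neighbour $y$ survive in the graph $G'$ obtained from $G$ by deleting all sink vertices: indeed $x \in V_{MAX} \subseteq V_{POS}$ is not a sink, and $y$ is non-sink by assumption. Next, since $G$ is strongly connected, $G'$ is strongly connected by definition, so there exists a directed path from $y$ to $x$ inside $G'$; choosing it to be a \emph{simple} path $y = v_0, v_1, \dots, v_k = x$ (with the $v_i$ pairwise distinct) is the key technical point. Adding the arc $(x,y) = (v_k, v_0)$ then closes this simple path into a genuine simple oriented cycle $v_0, v_1, \dots, v_k, v_0$ passing through the arc $(x,y)$. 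Hence the arc $(x,y)$ lies on a cycle.

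To finish, I would invoke the characterization lemma: in a MAX-acyclic SSG, $x$ has at most one outgoing arc on a cycle. Since every non-sink neighbour of $x$ contributes such an arc, there can be at most one non-sink neighbour, i.e. all neighbours of $x$ but one are sinks.

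The step I expect to require the most care is the passage from ``there is a path back to $x$'' to ``the arc $(x,y)$ lies on a cycle'': one must take a \emph{simple} path so that closing it with $(x,y)$ yields an honest cycle through that very arc, rather than merely a closed walk from which the extracted simple cycle might avoid $(x,y)$. The only genuine degenerate case is a self-loop $y = x$, which is itself a cycle through $(x,x)$ and is handled identically. Everything else is immediate from the definitions.
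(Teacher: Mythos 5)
Your proof is correct and follows essentially the same route as the paper's: use strong connectivity to obtain a directed path from each non-sink neighbour back to $x$, close it with the arc $(x,y)$ into a cycle, and invoke the characterization that a MAX vertex has at most one outgoing arc on a cycle. Your extra care about taking a \emph{simple} path, about the sinks-removed definition of strong connectivity, and about the self-loop case only makes explicit details the paper leaves implicit.
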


\begin{proof}
    Indeed, if $x$ has two neighbours $y$ and $z$ which are not sinks, then by strong connexity there are directed paths from $y$ to $x$ and from $z$ to $x$. Hence,
    both arcs $xy$ and $xz$ are on a cycle, contradicting the assumption of MAX-acyclicity.
\end{proof}

  From now on, we will focus on computing the values of a strongly connected MAX-acyclic SSG.
  Indeed, it easy to reduce the general case of a MAX-acyclic SSG to strongly connected by computing the DAG of the
  strongly connected components in linear time. We then only need to compute the values in each of the components, beginning by the leaves.


  We will show that the Hoffman-Karp algorithm~\cite{hoffman1966nonterminating,condon1993algorithms}, when applied
  to a strongly connected MAX-acyclic SSG, runs for at most a linear number of steps before reaching an optimal solution.
  Let us remind the notion of \emph{switchability} in simple stochastic games. If $\sigma$ is a strategy for MAX, then a MAX vertex
  $x$ is \emph{switchable} for $\sigma$ if there is an neighbour $y$ of $x$ such that $\val_\sigma(y) > \val_\sigma( \sigma(x) )$.
  \emph{Switching} such a vertex $x$ consists in considering the strategy $\sigma'$, equal to $\sigma$ but for $\sigma'(x)=y$.

  For two vectors $v$ and $w$, we note $v \geq w$ if the inequality holds componentwise, and $v>w$ if moreover at least one component is strictly larger.

  \begin{lemma}[See Lemma 3.5 in~\cite{tripathi2011strategy}]
      Let $\sigma$ be a strategy for MAX and $S$ be a set of switchable vertices. Let $\sigma'$ be the strategy obtained when all
      vertices of $S$ are switched. Then
      \[ \val_{\sigma'} > \val_\sigma . \]
  \end{lemma}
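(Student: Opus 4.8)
The plan is to exploit the local optimality characterization of the value vector under a fixed MAX strategy together with a monotone value-iteration argument. Write $w = \val_\sigma$ and recall that, once the MAX strategy $\sigma$ is fixed, the induced one-player game is characterized by the fixed-point conditions of Theorem~\ref{th:local_optimality_conditions}: $w(x) = w(\sigma(x))$ for $x \in V_{MAX}$, $w(x) = \min\{w(y)\mid (x,y)\in A\}$ for $x \in V_{MIN}$, $w(x) = \tfrac12(w(x^1)+w(x^2))$ for $x\in V_{AVE}$, and $w(x)=\val(x)$ for sinks. (As is standard, I work in a stopping SSG, which we may assume by Condon's reduction, so these fixed points are unique.)

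Next I would introduce the operator $F$ associated with the switched strategy $\sigma'$, defined on value vectors $v$ by $(Fv)(x) = v(\sigma'(x))$ for $x\in V_{MAX}$, $(Fv)(x) = \min\{v(y)\mid (x,y)\in A\}$ for $x\in V_{MIN}$, $(Fv)(x) = \tfrac12(v(x^1)+v(x^2))$ for $x\in V_{AVE}$, and $(Fv)(x)=\val(x)$ for sinks, so that $\val_{\sigma'}$ is precisely the unique fixed point of $F$. The crucial computation is that $Fw \geq w$ componentwise, with strict inequality at the switched vertices. Indeed, for a MAX vertex $x\notin S$ one has $(Fw)(x)=w(\sigma(x))=w(x)$, while MIN, average, and sink vertices satisfy $(Fw)(x)=w(x)$ directly from the optimality conditions for $\sigma$; and for a switched vertex $x\in S$ we chose $\sigma'(x)$ with $w(\sigma'(x)) > w(\sigma(x))=w(x)$, whence $(Fw)(x) > w(x)$. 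Since $S$ is a nonempty set of switchable vertices, this yields $Fw > w$. The key point making simultaneous switching harmless is that $F$ is evaluated at the single vector $w$, so the many switches are handled coordinate by coordinate.

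To finish, I would observe that $F$ is monotone: if $u \geq v$ then $Fu \geq Fv$, because each of the three rules (a fixed coordinate lookup, a minimum, and an average) preserves the componentwise order. Starting from $Fw \geq w$ and iterating, monotonicity gives by induction that the sequence $(F^k w)_{k\geq 0}$ is nondecreasing. Because the SSG is stopping, value iteration for $\sigma'$ converges to its unique fixed point, so $F^k w \to \val_{\sigma'}$; being a nondecreasing sequence bounded by its limit, we conclude $\val_{\sigma'} \geq Fw > w = \val_\sigma$, which is exactly the claim.

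I expect the main obstacle to be the convergence step rather than the algebra: one must justify that the monotone iterates $F^k w$ actually converge to $\val_{\sigma'}$ and not merely to some smaller fixed point. This is where the stopping hypothesis is essential, guaranteeing that some power of $F$ contracts in the sup-norm and hence that the fixed point is unique and attracting from any starting vector. The remaining verifications—the coordinatewise equalities from the optimality conditions and the monotonicity of $F$—are routine once the operator is set up.
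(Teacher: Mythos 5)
The paper itself gives no proof of this lemma---it is quoted from Tripathi et al.~\cite{tripathi2011strategy}---so there is no in-paper argument to compare against. Your argument (exhibit $w=\val_\sigma$ as a strict pre-fixed point of the one-player operator $F$ attached to $\sigma'$, then combine monotonicity of $F$ with convergence of value iteration to the unique fixed point) is the standard proof of this fact for \emph{stopping} games, and in that setting every step you give is correct: $Fw \geq w$ with strict inequality exactly on the switched set (evaluating all switches on the single vector $w$ is indeed what rules out interference between simultaneous switches, since a cycle of strict inequalities $w(y)>w(x)$ is impossible), monotone iterates, and uniqueness/attractivity of the fixed point because some power of $F$ is a sup-norm contraction.

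The genuine weak point is the opening parenthesis ``which we may assume by Condon's reduction.'' That reduction produces a \emph{different} game whose values are only approximately those of $G$; since switchability and the conclusion $\val_{\sigma'}>\val_\sigma$ are defined by strict inequalities between exact values, neither the hypothesis nor the conclusion transfers across the reduction, so stopping is not ``without loss of generality'' here. This is not an academic quibble: the paper applies the lemma to strongly connected MAX-acyclic games, which are in general not stopping. And for a non-stopping game your proof breaks at precisely the step you flagged as delicate: without the stopping hypothesis, $\val_{\sigma'}$ is only the \emph{least} fixed point of $F$, and a pre-fixed point ($w \leq Fw$) need not lie below the least fixed point (take $F$ to be the identity map on a value coordinate, as happens on a sink-free cycle), so the monotone iterates $F^k w$ may converge to a strictly larger fixed point. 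The lemma does remain true without stopping, but one must add an argument showing that any sink-free closed recurrent set of the chain induced by $(\sigma',\tau')$ contains no switched vertex and carries only vertices with $\val_\sigma=0$, so the ``trapped'' trajectories contribute nothing when passing to the limit. Either restrict your statement to stopping games (the setting of the cited source) or supply such an argument; as written, the claimed reduction is the gap.
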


   Let us recall the Hoffman-Karp algorithm:

   \begin{enumerate}
       \item Let $\sigma_0$ be any strategy for MAX and $\tau_0$ be a best response to $\sigma_0$
       \item while $(\sigma_t,\tau_t)$ is not optimal:
           \begin{enumerate}
               \item let $\sigma_{t+1}$ be obtained from $\sigma_t$ by switching one (or more) switchable vertex 
               \item let $\tau_{t+1}$ be a best response to $\sigma_{t+1}$
           \end{enumerate}
   \end{enumerate}

   The Hoffman-Karp algorithm computes a finite sequence $(\sigma_t)_{0 \leq t \leq T}$ of strategies for the MAX player such that
   \[        \forall 0 \leq t \leq T-1, \quad \val_{\sigma_{t+1}} > \val_{\sigma_t}. \]

If any MAX vertex $x$ in a strongly connected MAX-acyclic SSG has more than one sink neighbour, say $s_1, s_2, \cdots s_k$, then these can be replaced by a single sink neighbour $s'$ whose value is
\[ \val(s') := \max_{i=1..k} \val(s_i).\]
Hence, we can suppose that all MAX vertices in a strongly connected MAX-acyclic SSG have degree two.
For such a reduced game, we shall say that a MAX vertex $x$ is \emph{open} for a strategy $\sigma$ if $\sigma(x)$ is the sink
neighbour of $x$ and that $x$ is \emph{closed} otherwise.

   \begin{lemma}
     \label{lem.algo_max_acyclic}
       Let $G$ be a strongly connected, MAX-acyclic SSG, where all MAX vertices have degree 2. Then the Hoffman-Karp algorithm, starting from any strategy $\sigma_0$, halts in at most $2n_{M}$ steps. Moreover, starting from the strategy where every MAX vertex is open, the algorithm halts in at most $n_{M}$ steps. All in all, the computation is polynomial in the size of the game.
   \end{lemma}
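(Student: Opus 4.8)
The plan is to track, for each MAX vertex, whether it is \emph{open} or \emph{closed} along the sequence $\sigma_0,\sigma_1,\dots,\sigma_T$ produced by the algorithm, and to show that each vertex changes status very few times. After the reduction, every MAX vertex $x$ has exactly two neighbours: its sink neighbour $s_x$, of fixed value $a_x := \val(s_x)$, and its unique non-sink neighbour $c_x$ (the one carried by a cycle, by the preceding structural lemma). Since MAX moves deterministically, $\val_{\sigma_t}(x) = \val_{\sigma_t}(\sigma_t(x))$, so switchability of $x$ takes a clean form: if $x$ is open then it is switchable for $\sigma_t$ iff $\val_{\sigma_t}(c_x) > a_x$, and if $x$ is closed then it is switchable iff $a_x > \val_{\sigma_t}(c_x)$. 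Switching simply flips the status between open and closed.

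The engine of the argument is monotonicity. The Hoffman-Karp property gives $\val_{\sigma_{t+1}} \geq \val_{\sigma_t}$ componentwise for every $t$, hence $t \mapsto \val_{\sigma_t}(c_x)$ is nondecreasing, while $a_x$ is constant throughout (sink values never change). From this I would extract the key invariant: \emph{once a vertex is switched from open to closed, it stays closed forever}. Indeed, such a switch occurs only because $\val_{\sigma_t}(c_x) > a_x$; by monotonicity $\val_{\sigma_s}(c_x) \geq \val_{\sigma_t}(c_x) > a_x$ for every later $s$, so $x$ is never again switchable while closed, and therefore never reopened.

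This invariant bounds the number of status changes per vertex. A vertex that starts open can only be switched open $\to$ closed once, after which it is frozen, so it switches at most once. A vertex that starts closed can be switched closed $\to$ open once and later open $\to$ closed once (and is then frozen), so it switches at most twice. Summing, the total number of individual switches is at most $2n_M$; since every step of the algorithm switches at least one vertex (otherwise the strategy would already be optimal), there are at most $2n_M$ steps. Starting from the all-open strategy there are no initially-closed vertices, so every vertex switches at most once, giving the sharper bound of $n_M$ steps. Finally, each step is polynomial, since computing the best response $\tau_{t+1}$ is an MDP resolution and the values $\val_{\sigma_{t+1}}$ are obtained from the linear system of Lemma~\ref{lemma:precision}; thus the whole computation is polynomial in the size of the game.

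The step I expect to be most delicate is the monotonicity bookkeeping at $c_x$: I must make sure the inequality $\val_{\sigma_{t+1}} \geq \val_{\sigma_t}$ is applied at the right vertex ($c_x$, not $x$), that the reason for each switch is recorded with the correct strict inequality so that the frozen-closed invariant is genuinely strict and cannot be undone, and that a single step switching several vertices simultaneously does not break the per-vertex accounting. It does not, since the accounting is purely on status changes of individual vertices, independently of how many switches are batched into one step.
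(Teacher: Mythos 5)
Your proof is correct and follows essentially the same route as the paper: the same key invariant (a vertex switched from open to closed can never be switched again, by monotonicity of $\val_{\sigma_t}$ at the non-sink neighbour against the constant sink value), followed by the same per-vertex accounting of at most two switches in general and one switch from the all-open start. The only difference is presentational: you make explicit the characterization of switchability via $\val_{\sigma_t}(x)=\val_{\sigma_t}(\sigma_t(x))$ and the robustness to batched switches, which the paper leaves implicit.
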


   \begin{proof}
       We just observe that if a MAX vertex $x$ is closed at time $t$ ,
       then it remains so until the end of the computation. More precisely, if  $s := \sigma_{t-1}(x)$ is a sink vertex, and $y:=\sigma_{t}(x)$ is not,
       then since $x$ has been switched we must have
       \[\val_{\sigma_t}(y) > \val_{\sigma_t} (s). \]
       For all subsequent times $t'>t$, since strategies are improving we will have
       \[ \val_{\sigma_{t'}}(y) \geq \val_{\sigma_t}(y) > \val_{\sigma_t}(s) = \val_{\sigma_{t'}}(s) = \val(s),\]
       so that $x$ will never be switchable again.

       Thus starting from any strategy, if a MAX vertex is closed it cannot be opened and closed again, and if it is open 
       it can only be closed once.
   \end{proof}

   Each step of the Hoffman-Karp algorithm requires to compute a best-response for the MIN player.
   A best-response to any strategy can be simply computed with a linear program with as many variables as vertices in the SSG, hence in polynomial time. We will denote this complexity by $O(n^{\eta})$; it is well known that we can have $\eta \leq 4$, for instance with Karmarkar's algorithm. 

   \begin{theorem}
     \label{thm.complexity_max_acyclic}
     A strongly connected MAX-acyclic SSG can be solved in time $O(n_{M}n^{\eta})$.
   \end{theorem}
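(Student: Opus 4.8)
The plan is to combine the two lemmas immediately preceding the theorem with the cost of a single best-response computation. Theorem~\ref{thm.complexity_max_acyclic} asserts a running time of $O(n_M n^\eta)$ for solving a strongly connected MAX-acyclic SSG, so the natural route is to run the Hoffman-Karp algorithm and bound its total cost as (number of iterations) $\times$ (cost per iteration). Lemma~\ref{lem.algo_max_acyclic} already supplies the first factor: after reducing so that every MAX vertex has degree $2$, the algorithm halts in at most $2n_M$ steps from an arbitrary starting strategy. The discussion following Lemma~\ref{lem.algo_max_acyclic} supplies the second factor: each iteration requires computing a best response for MIN against a fixed MAX strategy, which is solvable by linear programming in time $O(n^\eta)$ with $\eta \le 4$.

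First I would dispose of the preprocessing. The paragraph before the theorem explains that multiple sink neighbours of a MAX vertex can be collapsed into a single sink carrying their maximum value, so we may assume every MAX vertex has exactly degree $2$ (one sink neighbour and one non-sink neighbour, by the earlier structural lemma for strongly connected MAX-acyclic games). This reduction is linear in the size of the game and does not change the values, so it is absorbed into the stated bound. I would also note that a strongly connected MAX-acyclic SSG is stopping: since it is strongly connected on the non-sink vertices and every MAX vertex has a sink neighbour, from any vertex there is positive probability of reaching a sink, which legitimizes invoking Theorem~\ref{th:local_optimality_conditions} and the best-response machinery.

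Next I would assemble the bound. By Lemma~\ref{lem.algo_max_acyclic} the number of iterations $T$ satisfies $T \le 2n_M$. Each iteration performs one best-response computation at cost $O(n^\eta)$, plus the switching step and the optimality test, both of which are dominated by $O(n^\eta)$. The initial best response $\tau_0$ to $\sigma_0$ is one more application of the same procedure. Multiplying, the total running time is
\[
O(n_M) \cdot O(n^\eta) = O(n_M\, n^\eta),
\]
which is exactly the claimed bound. I would remark that choosing the all-open starting strategy improves the iteration count to $n_M$ by the second part of the lemma, tightening the constant but not the asymptotic form.

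I do not expect a genuine obstacle here, since the theorem is essentially a bookkeeping corollary of the two preceding results; the only point requiring care is making sure the per-iteration cost is honestly $O(n^\eta)$ rather than merely the LP solve, and that the linear-time preprocessing (collapsing sink neighbours and decomposing into strongly connected components, handled earlier) does not dominate. The mildly delicate part is justifying that the best-response linear program genuinely has size polynomial in $n$ and that $\eta \le 4$ is a valid choice, but this is standard (Karmarkar's algorithm) and is already asserted in the text preceding the statement, so I would simply cite it rather than reprove it.
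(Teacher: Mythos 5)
Your proof takes essentially the same route as the paper: the theorem is there a direct combination of Lemma~\ref{lem.algo_max_acyclic} (at most $2n_M$, or $n_M$, Hoffman--Karp iterations) with the $O(n^{\eta})$ cost of each best-response linear program stated just before the theorem, which is exactly your bookkeeping argument. One caveat: your side remark that a strongly connected MAX-acyclic SSG is stopping is false in general (MIN may keep the play forever inside a cycle of MIN vertices, which has no sink neighbours), but this claim is never needed --- the paper makes no stopping assumption in this section, and the values $\val_{\sigma}$, best responses, and the switching lemma are well defined without it.
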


   Before ending this part, let us note that in the case where the game is also MIN-acyclic, one can compute directly a best response to a MAX strategy $\sigma$ without linear programming: starting with a MIN strategy $\tau_0$ where all MIN vertices are open, close all MIN vertices $x$ such that their neighbour has a value strictly less than their sink. One obtains a strategy $\tau_1$ such that
   \[ \val_{\sigma, \tau_1} < \val_{\sigma, \tau_0},\]
and the same process can be repeated. By a similar argument than in the previous proof, a closed MIN vertex will never be opened again, hence the number of steps is at most the number of MIN vertices, and each step only necessitates to compute the values, i.e. to solve a linear system (see Lemma~\ref{lemma:precision}).

\begin{corollary}
\label{cor.POS-acyclic}
     A strongly connected MAX and MIN-acyclic SSG can be solved in time $O(n_{M} n_{m} n^\omega)$, where $\omega$ is the exponent of matrix multiplication.
\end{corollary}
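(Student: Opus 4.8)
The plan is to run the Hoffman--Karp algorithm exactly as in the proof of Theorem~\ref{thm.complexity_max_acyclic}, but to replace the linear-programming computation of MIN's best response at each step by the combinatorial strategy-improvement procedure sketched just above. Since the game is MAX-acyclic and strongly connected with all MAX vertices of degree two, Lemma~\ref{lem.algo_max_acyclic} guarantees that the outer loop performs at most $O(n_M)$ iterations. Thus the whole cost reduces to bounding the cost of a single MIN best-response computation and multiplying it by $n_M$.

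First I would set up the MIN subroutine by symmetry with the MAX analysis. Applying the MIN-analogues of the earlier structural lemmas, in a strongly connected MIN-acyclic SSG every MIN vertex has all but one neighbour among the sinks; collapsing the sink neighbours of a MIN vertex into a single sink carrying their \emph{minimum} value, we may assume each MIN vertex has degree two, with one sink neighbour and one non-sink neighbour. Call a MIN vertex \emph{open} when $\tau$ points to its sink and \emph{closed} otherwise. The subroutine starts from the all-open strategy $\tau_0$ and, at each step, simultaneously closes every MIN vertex whose non-sink neighbour has value strictly below its sink value, recomputing values afterwards.

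The key step is to argue termination in $O(n_m)$ iterations. The simultaneous-switch lemma (the MIN-analogue of the cited Lemma~3.5 of~\cite{tripathi2011strategy}) gives $\val_{\sigma,\tau_{k+1}} < \val_{\sigma,\tau_k}$, so the iterates strictly decrease. Mimicking the monotonicity argument of Lemma~\ref{lem.algo_max_acyclic}: once a MIN vertex $x$ is closed, its non-sink neighbour $y$ satisfies $\val_{\sigma,\tau}(y) < \val(s)$ where $s$ is its sink, and since values only decrease thereafter while $\val(s)$ stays fixed, $x$ can never become switchable (hence re-opened) again. Therefore each MIN vertex is closed at most once, the subroutine halts after at most $n_m$ iterations, and at termination no MIN vertex is switchable, so $\tau$ is a genuine best response to $\sigma$.

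Each inner iteration only recomputes the values of a \emph{fixed} strategy pair $(\sigma,\tau)$, which amounts to solving the linear system of Lemma~\ref{lemma:precision} and therefore costs $O(n_a^\omega) = O(n^\omega)$. Consequently one MIN best response costs $O(n_m n^\omega)$, and the full algorithm costs $O(n_M n_m n^\omega)$, as claimed. The point requiring the most care is the correctness of the subroutine, namely that a fixed point of the closing process is optimal for MIN against $\sigma$; this follows from the MIN-acyclic structure exactly as the optimality of the stabilized strategy does in the MAX case, but one must check that fixing $\sigma$ preserves the ``one sink, one non-sink'' shape of every MIN vertex, which it does, since that shape is a property of the MIN vertices' out-neighbourhoods in the original strongly connected game and is independent of $\sigma$.
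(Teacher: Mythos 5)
Your proposal is correct and follows essentially the same route as the paper: the text preceding the corollary describes exactly this replacement of the linear-programming best response by an iterative closing procedure for MIN (starting from the all-open strategy, closing switchable MIN vertices, with closed vertices never re-opening), giving at most $n_m$ inner iterations each costing one linear-system solve via Lemma~\ref{lemma:precision}, inside the $O(n_M)$-step outer loop of Lemma~\ref{lem.algo_max_acyclic}. Your added details (collapsing a MIN vertex's sink neighbours into one sink of minimum value, and checking that termination yields a genuine best response) are faithful elaborations of what the paper leaves implicit.
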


\section{SSG with few fork vertices}\label{sec:few_cycles}

Work on this section has begun with Yannis Juglaret during his Master internship at PRiSM laboratory. Preliminary results about SSGs with one simple cycle can be found in his report~\cite{yannisreport}.
We shall here obtain fixed-parameter tractable (FPT) algorithms in terms of parameters quantifying how far a graph is from being MAX-acyclic and MIN-acyclic, in the sense of section \ref{sec:max-acyclic}. These parameters are:

$$k_p = \displaystyle{\sum_{x \in V_{POS}} (|\{ y: (x,y)\in A \mbox{ and is in a cycle}\}| -1)} $$
and
$$k_a = \displaystyle{\sum_{x \in V_{AVE}} (|\{ y: (x,y) \in A \mbox{ and is in a cycle}\}| -1)} .$$

We say that an SSG is POS-acyclic (for \emph{positional} acyclic) when it is both MAX and MIN-acyclic. Clearly, parameter $k_p$ counts
the number of edges violating this condition in the game. Similarly, we say that the game is AVE-acyclic when average vertices have at most one outgoing arc in a cycle. 
We call \emph{fork vertices}, those vertices that have at least two outgoing arcs in a cycle. 
Since averages vertices have only two neighbours, $k_a$ is the number of fork average vertices.

Note that:
\begin{enumerate}
\item When $k_p = 0$ (respectively $k_a=0$), the game is POS-acyclic (resp. AVE-acyclic).
\item When $k_a = k_p =0$, the strongly connected components of the game are cycles. We study these games, which we call \emph{almost acyclic}, in detail in subsection \ref{section:almost}.
\item Finally, the number of simple cycles of the SSG is always less than $k_p + k_a$, therefore getting an FTP algorithm in $k_p$ and $k_a$ immediately gives an FTP algorithm in the number of cycles.
\end{enumerate}

We obtain:

\begin{theorem} \label{choucroute}
There is an algorithm which solves the value problem for SSGs in time $O(n f(k_p, k_a))$, with $f(k_p, k_a)= k_a!4^{k_a} 2^{k_p}$.
\end{theorem}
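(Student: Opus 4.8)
The plan is to reduce, by exhaustive branching, the given SSG to $f(k_p,k_a)$-many instances that are either acyclic or \emph{almost acyclic} (SCCs equal to simple cycles), each of which can be solved in linear time by the solver of subsection~\ref{section:almost}, and then to select the correct branch using the uniqueness part of the optimality conditions of Theorem~\ref{th:local_optimality_conditions}. As in Section~\ref{sec:max-acyclic}, I would first assume the game is stopping (Condon's reduction) and treat it component by component along the DAG of strongly connected components, so that it suffices to eliminate every fork vertex: once an SCC contains no fork vertex, every one of its vertices has exactly one outgoing arc inside the SCC, so the SCC is a simple cycle and the instance is almost acyclic.

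First I would deal with the \textbf{positional forks}, accounting for the factor $2^{k_p}$. At a positional fork vertex $x$ the optimal strategy selects a single outgoing arc, hence at most one of the $c_x$ cyclic arcs of $x$ is used. I would therefore enumerate, simultaneously over all positional fork vertices, which single cyclic out-arc each of them retains, deleting the other cyclic arcs while keeping all non-cyclic ones. The number of combinations is $\prod_{x} c_x \le \prod_x 2^{c_x-1} = 2^{k_p}$, using $c \le 2^{c-1}$. In every branch each positional vertex now has at most one cyclic out-arc, so the game is POS-acyclic; and in the branch that retains, at each fork vertex, the arc genuinely used by the optimal strategies (automatically so whenever that arc is non-cyclic), the vector $\val$ of the original game still satisfies the restricted optimality conditions, so that branch recovers the true values.

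The crux is the \textbf{average forks}, accounting for $k_a!\,4^{k_a}$. After the previous step the only remaining fork vertices are the $\le k_a$ average forks, so I only need to resolve those to reach an almost-acyclic instance. Here I cannot simply pick one arc, since both neighbours of an average vertex are taken with probability $\tfrac12$. Instead I would guess the relative order of the values of the fork average vertices ($k_a!$ choices) together with a constant number of local comparisons per such vertex ($4^{k_a}$ in total), which is exactly the information needed to ``cut'' each fork average vertex, turning it into a boundary vertex whose (still unknown) value is a parameter. With the order and local bits fixed, the remaining game is almost acyclic, and running the linear-time solver of subsection~\ref{section:almost} expresses every value as an explicit linear function of the $\le k_a$ parameters; the fixed-point equations $v_i = \tfrac12(\val(a_i^1)+\val(a_i^2))$ then become a linear system in $\le k_a$ unknowns, whose size is independent of $n$ and which I solve by Lemma~\ref{lemma:precision}. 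The hard part will be proving precisely that an ordering plus $O(1)$ bits per fork average vertex suffices to linearise the fixed point and to cut \emph{all} average-induced cycles, while keeping the per-branch work linear in $n$ (this is exactly the reason one reduces to acyclic instances rather than invoking Corollary~\ref{cor.POS-acyclic} on the whole POS-acyclic game).

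Finally I would assemble and certify. The total number of branches is $2^{k_p}\cdot k_a!\,4^{k_a}=f(k_p,k_a)$; each leaf costs $O(n)$ for the almost-acyclic/acyclic solve plus $\mathrm{poly}(k_a)$ for the small parameter system, which folds into $f$. For each candidate value vector produced I would verify, in $O(n)$, whether it satisfies the four optimality conditions of Theorem~\ref{th:local_optimality_conditions} in the \emph{original} game; by uniqueness of the solution, any vector that passes is necessarily $\val$, and the correct branch identified above passes. This yields the value problem in time $O(n\,f(k_p,k_a))$. Two technical points I would still have to address are preserving the stopping property (hence uniqueness of the optimality conditions) under arc deletion in the positional step, and checking that the branch enumeration and certification can indeed be organised so that the dependence on $n$ stays linear.
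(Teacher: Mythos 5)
Your handling of the positional forks is sound and matches the paper: deleting all but one cyclic out-arc at each positional fork gives at most $\prod_x c_x \le 2^{k_p}$ POS-acyclic branches, the correct branch preserves the original value vector by the uniqueness part of Theorem~\ref{th:local_optimality_conditions}, and an $O(n)$ certification per branch selects it. The gap is in the average-fork step, which is the crux of the theorem. Cutting the $k_a$ average forks into sinks of unknown values $v_1,\dots,v_{k_a}$ makes the value of every vertex only a \emph{piecewise} linear function of $(v_1,\dots,v_{k_a})$, and the linear piece is selected by the optimal choice at \emph{every} positional vertex of the resulting almost-acyclic game, i.e.\ by up to $\Theta(n)$ comparisons. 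Each such comparison depends on the actual magnitudes of the parameters after dilution through chains of average vertices, not on their relative order: a MAX vertex choosing between its sink of value $s$ and a path reaching the cut vertex $v_1$ through $j$ average vertices asks whether $2^{-j}v_1 + c \ge s$, and different vertices switch at different thresholds of $v_1$. No ordering of $v_1,\dots,v_{k_a}$ together with a constant number of bits per fork can determine all these outcomes, so the claim that $k_a!\,4^{k_a}$ branches suffice to linearise the fixed point is not merely unproven (as you acknowledge) but has no evident repair; the resemblance of the paper's bound to a count of orderings is coincidental.

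The paper proceeds quite differently and never parametrises the cut vertices. It first checks in linear time whether the everywhere-closed strategy is optimal (its values solve a $k_a \times k_a$ linear system in the spirit of Eq.~(\ref{eq.value_cycle})). Otherwise some positional vertex is optimally open, and Lemma~\ref{lem.cycles} is the key: opening the last MAX (resp.\ MIN) vertex before a fork yields a subgame with $k_a-1$ forks; solving it recursively produces a set $S[\sigma]$ of open vertices, and among the at most $2k_a-1$ of them that are first after a fork, one is guaranteed to be optimally open in $G$. Each candidate is then tested by solving one more subgame with $k_a-1$ forks and checking local optimality in $G$. This gives the recursion $T(k_a) \le 4k_a\,T(k_a-1)$ with $T(0)=O(n)$ by Theorem~\ref{thm.1cycle}, hence $k_a!\,4^{k_a}\,O(n)$, multiplied by $2^{k_p}$ for the positional brute force. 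Note finally that your parameter-cutting idea is essentially the method of Section~\ref{sec:feedback}: there the unknown sink value is recovered by dichotomic search using the monotonicity of Lemma~\ref{lemma:dichotomy}, at a cost of $O(n_a)$ recursive solves per cut vertex, which is precisely why that route yields $O(n^{k+1})$ rather than a bound linear in $n$.
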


As a corollary, by remark 3 above we have:
\begin{theorem}\label{th:few-cycles}
There is an algorithm which solves the value problem for SSGs with $k$ simple cycles in time $O(n g(k))$ with $g(k)= (k-1)!4^{k-1}$.
\end{theorem}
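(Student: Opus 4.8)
The plan is to obtain Theorem~\ref{th:few-cycles} directly from Theorem~\ref{choucroute}: the algorithm will be exactly the one of Theorem~\ref{choucroute}, and the only work is to bound its running time $O(n\,f(k_p,k_a))$ by $O(n\,g(k))$, where $k$ is the number of simple cycles. Everything therefore reduces to the structural inequality
\[ k_p + k_a \le k-1 \]
(valid whenever the game has at least one cycle, i.e. $k\ge 1$; when $k=0$ the game is acyclic and solvable in linear time directly, so $g$ need only be considered for $k\ge 1$). I would isolate this inequality as a lemma and prove it first, and then finish with an elementary monotonicity estimate.

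For the structural inequality, I would first restrict attention to the spanning subgraph $H$ consisting of exactly those arcs that lie on some (non-trivial) cycle, and let $C$ be the set of non-sink vertices incident to such an arc. Since the head of a cyclic arc is again on a cycle, $C$ is exactly the vertex set of $H$ and every cyclic arc has its tail in $C$; summing the defining expressions of $k_p$ and $k_a$ over $C$ then gives $k_p + k_a = |A(H)| - |C|$. Now every arc of $H$ lies on a directed cycle, which forces every weakly connected component of $H$ to be strongly connected; hence the indicator vectors of the simple directed cycles of $H$ span the space of rational circulations of $H$, whose dimension is the circuit rank $|A(H)| - |C| + p$, where $p \ge 1$ is the number of components. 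A spanning family of a space of dimension $d$ has at least $d$ members, so $H$ — and hence $G$ — contains at least $(k_p+k_a) + p \ge k_p + k_a + 1$ distinct simple cycles, which is precisely the claimed inequality.

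With $s := k_p + k_a \le k-1$ in hand, the running time follows from a one-line computation: since $k_a \le s$ we have $k_a! \le s!$, and since $4^{k_a} 2^{k_p} \le 4^{k_a} 4^{k_p} = 4^{s}$, we get
\[ f(k_p,k_a) = k_a!\,4^{k_a}\,2^{k_p} \le s!\,4^{s} \le (k-1)!\,4^{k-1} = g(k), \]
the last inequality holding because $m \mapsto m!\,4^m$ is increasing and $s \le k-1$. Substituting this into the $O(n\,f(k_p,k_a))$ bound of Theorem~\ref{choucroute} yields the desired $O(n\,g(k))$ algorithm.

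I expect the main obstacle to be the middle step — specifically the assertion that the simple directed cycles span the whole circulation space of $H$. This is exactly where the hypothesis that every arc of $H$ lies on a cycle is indispensable: it guarantees that each weakly connected component is strongly connected, so that a \emph{conservative} arc-weighting (zero sum around every directed cycle) must be a potential difference, which is the dual formulation of the spanning claim. The remaining parts are only an identification of parameters and a monotonicity inequality. A more self-contained alternative would pick one outgoing cyclic arc per vertex to form a sub-union of cycles and argue that each of the remaining $k_p+k_a$ arcs yields an independent new simple cycle; but controlling this independence essentially reproves the same cycle-space dimension count, so I would favour the circulation argument.
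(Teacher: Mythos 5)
Your proof is correct and follows essentially the same route as the paper: there, Theorem~\ref{th:few-cycles} is obtained as an immediate corollary of Theorem~\ref{choucroute} together with remark~3 of Section~\ref{sec:few_cycles}, which is precisely your structural inequality $k_p+k_a\le k-1$ combined with the same monotonicity estimate $f(k_p,k_a)\le g(k)$. The only difference is that the paper asserts this counting relation without proof --- and in fact states it with the two sides inadvertently swapped (``the number of simple cycles of the SSG is always less than $k_p + k_a$'', whereas what is needed, and what your circulation-space dimension argument correctly establishes, is $k_p+k_a<k$) --- so your proposal supplies a justification, and a small correction, that the paper omits.
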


Note that in both cases, when parameters are fixed, the dependance in $n$ is \emph{linear}.\\

Before going further, let us explain how one could easily build on the previous part and obtain an FPT algorithm in parameter $k_p$, but with a much worse dependance in $n$.

When $k_p>0$, one can fix partially a strategy on positional fork vertices, hence obtaining a POS-acyclic subgame that can be solved in polynomial time according to Corollary~\ref{cor.POS-acyclic}, using the Hoffman-Karp algorithm. 
Combining this with a bruteforce approach looking exhaustively through all possible local choices at positional fork vertices, we readily obtain a polynomial  algorithm for the value problem when $k_p$ is fixed:
\begin{theorem}
  \label{thm.cycles_HK}
  There is an algorithm to solve the value problem of an SSG in time $O(n_{M} n_{m} n^\omega 2^{k_p})$. 
\end{theorem}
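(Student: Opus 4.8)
The plan is to reduce the value computation on $G$ to polynomially many value computations on POS-acyclic games, one for each way of locally resolving the positional fork vertices, and then to single out the correct resolution by means of the optimality conditions of Theorem~\ref{th:local_optimality_conditions}. First I would list the positional fork vertices and, for each such vertex $x$, its set of outgoing arcs lying on a cycle, of size $d_x \geq 2$. A \emph{local choice} at $x$ consists in retaining exactly one of these $d_x$ cycle arcs and discarding the other $d_x-1$, while keeping every non-cycle arc of $x$. A \emph{resolution} $c$ is such a choice made at every positional fork vertex simultaneously.

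Since any subgraph of $G$ has its cycles among those of $G$, after a resolution every positional vertex has at most one outgoing arc on a cycle, so the resulting game $G_c$ is POS-acyclic; decomposing it into strongly connected components and applying Corollary~\ref{cor.POS-acyclic} bottom-up along the DAG of components solves it in time $O(n_M n_m n^\omega)$, yielding a value vector $w_c := (\val_{G_c}(x))_x$. The number of resolutions is $\prod_x d_x$, and the combinatorial bound follows from the elementary inequality $d \leq 2^{d-1}$, valid for every integer $d \geq 1$: summing the exponents over positional fork vertices gives $\prod_x d_x \leq 2^{\sum_x (d_x-1)} = 2^{k_p}$, since $k_p$ is exactly the number of cycle arcs in excess of one at positional vertices. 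Hence the whole enumeration costs $O(n_M n_m n^\omega 2^{k_p})$ as announced; after reducing $G$ to a stopping game we may freely invoke Theorem~\ref{th:local_optimality_conditions}.

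For correctness I would argue two complementary facts. Let $\sigma^*,\tau^*$ be optimal strategies of $G$, and let $c^*$ be the resolution retaining, at each positional fork vertex whose optimal move is a cycle arc, that very arc (and an arbitrary cycle arc elsewhere). Then $\sigma^*$ and $\tau^*$ are still legal in $G_{c^*}$, and since passing from $G$ to $G_{c^*}$ only removes options for each player, $\sigma^*$ remains a best response to $\tau^*$ and $\tau^*$ a best response to $\sigma^*$ in $G_{c^*}$; thus $(\sigma^*,\tau^*)$ is an optimal pair of $G_{c^*}$, and because the value of a fixed pair depends only on the arcs it uses we get $w_{c^*} = \val(G)$. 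Conversely, Theorem~\ref{th:local_optimality_conditions} states that the value vector of the stopping game $G$ is the \emph{unique} vector satisfying its local optimality conditions. It therefore suffices to test, for each resolution $c$, whether $w_c$ satisfies these conditions in the full game $G$, checking at every positional fork vertex the max (resp. min) over all original arcs, including the discarded ones; this test runs in time $O(n+|A|)$ and is dominated by the solving step. The resolution $c^*$ passes it, and any $c$ that passes it necessarily has $w_c = \val(G)$, so outputting such a $w_c$ solves the game.

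The main obstacle is precisely this selection step: one cannot simply take the pointwise maximum or minimum of the vectors $w_c$, because discarding a cycle arc at a MAX fork vertex can only lower the value whereas discarding one at a MIN fork vertex can only raise it, so the $w_c$ are not monotonically ordered. The equilibrium-restriction argument circumvents this by exhibiting one resolution that reproduces $\val(G)$ exactly, and the uniqueness half of Theorem~\ref{th:local_optimality_conditions} turns this existence statement into an effective test. A secondary technical point is that a resolution may destroy the stopping property of $G$; this is harmless, since the value of each $G_c$ is well defined under any pair of strategies (Lemma~\ref{lemma:precision}) and is what Corollary~\ref{cor.POS-acyclic} computes, while the uniqueness argument is applied only to the stopping game $G$ and never to the $G_c$.
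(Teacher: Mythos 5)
Your proposal is correct and follows essentially the same route as the paper: brute-force enumeration of the at most $2^{k_p}$ local choices at positional fork vertices, each yielding a POS-acyclic subgame that is solved component by component with the Hoffman--Karp algorithm of Corollary~\ref{cor.POS-acyclic}. The paper's own proof is only a two-sentence sketch, and your additions --- the counting $\prod_x d_x \leq 2^{\sum_x (d_x-1)} = 2^{k_p}$, the equilibrium-restriction argument showing one resolution reproduces $\val(G)$, and the selection of that resolution via the local optimality conditions and uniqueness of Theorem~\ref{th:local_optimality_conditions} on a stopping game --- are precisely the details the paper leaves implicit.
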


We shall conserve this brute-force approach. 
In the following, we give an algorithm that reduces the polynomial complexity to a linear complexity when $k_a$ is fixed. From now on, up to applying the same bruteforce procedure, we assume $k_p = 0$ (all fork vertices are average vertices). We also consider the case of a strongly connected SSG, since otherwise the problem can be solved for each strongly connected component as done in Section~\ref{sec:max-acyclic}.  We begin with the baseline $k_a = 0$ and extend the algorithm to general values of $k_a$. Before this, we provide some preliminary lemmas and definitions that will be used in the rest of the section.

A \emph{partial strategy} is a strategy defined on a subset of vertices controlled by the player. Let $\sigma$ be such a partial strategy for player MAX, we denote by $G[\sigma]$ the subgame of $G$ where the set of strategies of MAX is reduced to the ones that coincide with $\sigma$ on its support. According to equation~(\ref{eq.value_function}), the value of an SSG is the highest expected value that MAX can guarantee, then it decreases with the set of actions of MAX:
\begin{lemma}
  \label{lem.subgame}
  Let $G$ be an SSG with value $v$, $\sigma$ a partial strategy of MAX, and $G[\sigma]$ the subgame induced by $\sigma$ with value $v'$. Then $v \geq v'$.
\end{lemma}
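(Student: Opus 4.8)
The plan is to read off the result directly from the max–min characterisation of the value in equation~(\ref{eq.value_function}), exploiting that forming the subgame $G[\sigma]$ only shrinks the strategy space of MAX and leaves everything else — the MIN strategies, the coin tosses at AVE vertices, and the sink values — untouched. Since MAX is the maximising player, restricting his options can only lower what he can guarantee.

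First I would make the relationship between the two strategy spaces precise. Write $\Sigma$ for the set of MAX strategies in $G$ and let $\Sigma[\sigma] \subseteq \Sigma$ denote those MAX strategies that agree with $\sigma$ on its support; by definition of the subgame, $\Sigma[\sigma]$ is exactly the set of MAX strategies available in $G[\sigma]$, while the set $T$ of MIN strategies is the same in both games. The crucial observation is that the transition structure at AVE and SINK vertices is identical in $G$ and $G[\sigma]$, so for any fixed pair $(\sigma', \tau)$ with $\sigma' \in \Sigma[\sigma]$ the probability measure $\Proba_{\sigma',\tau}^{x}$ on plays, and hence the quantity $\val_{\sigma',\tau}(x)$, is the same whether it is computed in $G$ or in $G[\sigma]$. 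Thus the two games differ only in which MAX strategies are admissible.

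Then I would apply equation~(\ref{eq.value_function}) to each game: $v = \max_{\sigma' \in \Sigma} \min_{\tau \in T} \val_{\sigma',\tau}(x)$ and $v' = \max_{\sigma' \in \Sigma[\sigma]} \min_{\tau \in T} \val_{\sigma',\tau}(x)$, where the inner expressions coincide by the previous paragraph. Because the function $\sigma' \mapsto \min_{\tau \in T} \val_{\sigma',\tau}(x)$ is the same in both settings and $\Sigma[\sigma] \subseteq \Sigma$, taking the maximum over the smaller index set cannot exceed the maximum over the larger one, which gives $v' \leq v$ immediately. The argument is componentwise, so it holds at every vertex and therefore for the value of the game at any designated start vertex.

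I do not expect a genuine obstacle here: the statement is a monotonicity fact, and the only point that deserves a line of verification is the invariance of each $\val_{\sigma',\tau}(x)$ under passing to $G[\sigma]$, i.e.\ that the subgame is obtained purely by pruning MAX's choices and not by altering the dynamics. One should also note in passing that $G[\sigma]$ is again a bona fide SSG, so its value $v'$ is well defined by the same theory; once this is observed, the inequality is just $\max$ over a subset being dominated by $\max$ over the whole set.
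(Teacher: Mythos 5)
Your proof is correct and matches the paper's own justification, which is simply the remark preceding the lemma that by equation~(\ref{eq.value_function}) the value is the maximum over MAX strategies of what he can guarantee, so shrinking MAX's strategy set to $\Sigma[\sigma] \subseteq \Sigma$ can only decrease it. Your added verification that $\val_{\sigma',\tau}(x)$ is unchanged when computed in $G[\sigma]$ is the right (and only) detail worth spelling out; the paper leaves it implicit.
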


In strongly connected POS-acyclic games, positional vertices have at least one outgoing arc to a sink. Recall that, in case the strategy chooses a sink, we say that it is open at this vertex (the strategy is said open if it is open at a vertex), and closed otherwise. We can then compare the value of an SSG with that of the subgame generated by any open strategy. 
\begin{lemma}
  \label{lem.opencycle}
  Let $x$ be a MAX vertex of an SSG $G$ with a sink neighbour, and $\sigma$ the partial strategy open at $x$. If it is optimal to open $x$ in $G$, then it is optimal to open it in $G[\sigma]$. 
\end{lemma}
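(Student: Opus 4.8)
The plan is to show that restricting MAX to open $x$ costs nothing, i.e. that the value vector of $G[\sigma]$ equals that of $G$; the phrase "it is optimal to open $x$ in $G[\sigma]$" then follows because the open and closed alternatives at $x$ get compared through exactly the same vertex values in both games. One inequality is free: since $\sigma$ only shrinks the set of strategies available to MAX, Lemma~\ref{lem.subgame} gives $\val_{G[\sigma]} \leq \val_G$ componentwise.

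For the reverse inequality I would feed the hypothesis in directly. Saying that it is optimal to open $x$ in $G$ means precisely that there is an optimal pair of strategies $(\sigma^*,\tau^*)$ with $\sigma^*(x)$ equal to the sink neighbour $s$ of $x$. Since $\tau^*$ is a best response to $\sigma^*$ and the pair realises the value at every vertex, we have $\val_{\sigma^*}(y) = \val_{\sigma^*,\tau^*}(y) = \val(y)$ for all $y \in V$. As $\sigma^*$ agrees with $\sigma$ on its support $\{x\}$, it is a legal MAX strategy in $G[\sigma]$, so for every vertex $y$,
\[ \val_{G[\sigma]}(y) \;\geq\; \min_{\tau \in T} \val_{\sigma^*, \tau}(y) \;=\; \val_{\sigma^*}(y) \;=\; \val(y), \]
the first inequality holding because $\sigma^*$ is still allowed in $G[\sigma]$. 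Combined with the previous paragraph this gives $\val_{G[\sigma]} = \val_G$.

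It then remains to translate this equality back into the language of the statement. Let $c$ denote the non-sink neighbour of $x$. Being optimal to open $x$ in $G$ means $\val(s) = \val(x) \geq \val(c)$; since the value equality just established gives $\val_{G[\sigma]}(c) = \val(c)$ while $\val(s)$ is a fixed sink value, the open move at $x$ remains at least as good as the closed one when both are measured with the values of $G[\sigma]$, which is exactly what it means for opening $x$ to be optimal in $G[\sigma]$.

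The step I expect to be the crux is the reverse inequality, and specifically the assertion that an optimal strategy realises the value at \emph{every} vertex simultaneously rather than at a single designated start vertex; this uniform optimality is what lets the single witness $\sigma^*$ certify $\val_{G[\sigma]} \geq \val_G$ on the whole vector at once. Everything else is bookkeeping resting on Lemma~\ref{lem.subgame} and the existence of optimal strategies recalled in Section~1.
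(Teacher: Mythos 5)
Your proof is correct, but it takes a genuinely different and heavier route than the paper's. The paper's argument is pure monotonicity: optimality of opening $x$ in $G$ gives $\val_G(s) \geq \val_G(y)$ for every neighbour $y$ of $x$ (where $s$ is the sink neighbour); Lemma~\ref{lem.subgame} gives $\val_{G[\sigma]}(y) \leq \val_G(y)$, and the sink value is untouched by the restriction, so the open move still dominates when measured with the values of $G[\sigma]$. No reverse inequality, and hence no witness strategy, is ever needed. You instead prove the stronger statement $\val_{G[\sigma]} = \val_G$ by injecting the optimal witness $\sigma^*$ into the subgame, and only then conclude. What your route buys is value preservation under the restriction, a stronger and reusable fact; what it costs is the reliance on uniform optimality in the guarantee sense, i.e. a strategy $\sigma^*$ with $\min_\tau \val_{\sigma^*,\tau}(y) = \val(y)$ for all $y$ simultaneously --- exactly the point you flag as the crux.

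That reliance deserves one warning. The fact you need is true and standard (it is Condon's existence theorem for stationary optimal strategies), but it is stronger than what Section~1 of the paper literally states, namely the existence of a pair with $\val_{\sigma^*,\tau^*}(y) = \val(y)$ for all $y$. The two formulations are equivalent for stopping games (pair optimality forces both strategies to be greedy with respect to $\val$, and greediness suffices there), but the strongly connected games of Section~\ref{sec:few_cycles}, where this lemma is applied, need not be stopping, and then the pair form does not imply that $\tau^*$ is a best response to $\sigma^*$. For instance, take a MAX vertex $a$ and a MIN vertex $b$ forming a two-cycle, each also pointing to a sink of value $\tfrac12$: the pair with $\sigma^*(a)=b$ and $\tau^*(b)$ equal to the sink realises the value $\tfrac12$ at every vertex, yet $\min_\tau \val_{\sigma^*,\tau}(a) = 0$. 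So your first equality $\val_{\sigma^*}(y) = \val_{\sigma^*,\tau^*}(y)$ is justified only once ``optimal'' is read in the guarantee sense; with that (standard) reading your proof stands, while the paper's monotonicity argument sidesteps the distinction entirely, which is what makes it more robust.
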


\begin{proof}
Since it is optimal to open $x$ in $G$, the value of its neighbour sink is at least that of any neighbour vertex, say $y$. But, in view of Lemma~\ref{lem.subgame}, the value of $y$ in $G[\sigma]$ is smaller than in $G$, and then it is again optimal to play a strategy open at $x$ in the subgame. 
\end{proof}

This lemma allows to reduce an almost acyclic SSG (resp. an SSG with parameter $k_a>0$) to an acyclic SSG (resp. an SSG with parameter $k_a-1$). Indeed, if the optimal MAX strategy is open at vertex $x$, then the optimal strategy of  the subgame open at any MAX vertex will be open at $x$. A solution to find $x$ once the subgame is solved (and then to reduce the parameter $k_a$) consists in testing all the open MAX vertices. But it may be the case that all MAX vertices are open which would not yield a FPT algorithm. In Lemma~\ref{lem.1cycle} (resp. Lemma~\ref{lem.cycles}), we give a restriction on the set of MAX vertices that has to be tested  when $k_a=0$ (resp. $k_a>0$) which provides an FPT algorithm.

\subsection{Almost acyclic SSGs} \label{section:almost}

We consider an SSG with $k_a = 0$. Together with the hypothesis that it is POS-acyclic and strongly connected, its graph, once sinks are removed, consists of a single cycle. A naive algorithm to compute the value of such SSG consists in looking for, if it exists, a vertex that is open in the optimal strategy, and then solve the acyclic subgame:
\begin{enumerate}
\item For each positional vertex $x$:
  \begin{enumerate}
  \item compute the values of the acyclic SSG $G[\sigma]$, where $\sigma$ is the partial strategy open at $x$,
  \item if the local optimality condition is satisfied for $x$ in $G$, return the values.
  \end{enumerate}
\item If optimal strategies have not been found, return the value when all vertices are closed.
\end{enumerate}
This naive algorithm uses the routine that computes the value of an SSG with only one cycle. When the strategies are closed, the values can be computed in linear time as for an acyclic game. Indeed, let $x$ be an average vertex (if none, the game can be solved in linear time) and  $s_1 \dots s_\ell$ be the values of the average neighbour sinks  in the order given by a walk on the cycle starting from $x$. Then the value of $x$ satisfies the equation $\val(x) = \frac12 s_1 + \frac12 ( \frac12 s_2 + \frac12 ( \dots +\frac12 (\frac12s_\ell + \frac12 \val(x)))   )$, so that  
\begin{equation}
\label{eq.value_cycle}
\val(x) = \frac{2^\ell}{2^\ell -1} \sum_{i=1}^\ell 2^{-i} s_i,
\end{equation}
which can be computed in time linear in the size of the cycle. The value of the other vertices can be computed by walking backward from $x$, again in linear time. Finally, since solving an acyclic SSG is linear, the complexity of the algorithm is $O(n^2)$  which is still better than the complexity $O(n_{M}n_{m}n^\omega)$ obtained with the Hoffman-Karp algorithm (see Theorem~\ref{thm.cycles_HK}). 


Remark that this algorithm can readily be extended to a SSG with $k$ cycles with a complexity $O(n^{k+1})$. Hence it is not an FPT algorithm for the number of cycles.
However, we can improve on this naive algorithm by noting that the optimal strategy belongs to one of the following subclasses of strategies:
\begin{itemize}
\item[$(i)$] strategies closed everywhere,
\item[$(ii)$] strategies open at least at one MAX vertex,
\item[$(iii)$] strategies open at least at one MIN vertex.
\end{itemize}
The trick of the algorithm is that, knowing which of the three classes the optimal strategy belongs to, the game can be solved in linear time. Indeed:
\begin{itemize}
\item[$(i)$] If the optimal strategy is closed at every vertex, the value can be computed in linear time as shown before.
\item[$(ii)$] If the optimal strategy is open at a MAX vertex (the MIN case is similar), then it suffices to solve in linear time the acyclic game $G[\sigma]$ where $\sigma$ is any partial strategy open at a MAX vertex, and then use the following Lemma to find an open vertex for the optimal strategy of the initial game.
\end{itemize}

\begin{lemma}
  \label{lem.1cycle}
  Let $G$ be a strongly connected almost-acyclic SSG.  Assume that the optimal strategy is open at a MAX vertex. For any partial strategy $\sigma$ open at a MAX vertex $x$, let $x = x_0, x_1 \dots x_\ell = x$ be the sequence of the $\ell$ open MAX vertices for the optimal strategy of $G[\sigma]$ listed in the cycle order. Then it is optimal to open $x_{1}$.
\end{lemma}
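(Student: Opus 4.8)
The plan is to translate the conclusion ``it is optimal to open $x_1$'' into a single scalar inequality and then prove that inequality by comparing the values of $G$ and of the subgame $G[\sigma]$ arc by arc along the cycle. Write $c(u)$ for the unique cycle-neighbour of a vertex $u$ (the vertex $u$ reaches when \emph{closed}) and $s(u)$ for the value of its sink neighbour (its value when \emph{open}); let $v$ and $v^*$ be the optimal value vectors of $G[\sigma]$ and of $G$. By Lemma~\ref{lem.subgame} we have $v\le v^*$ everywhere. Since a MAX vertex satisfies $v^*(u)=\max(s(u),v^*(c(u)))$ by Theorem~\ref{th:local_optimality_conditions}, opening $u$ is optimal in $G$ exactly when $u$ lies in $O^*:=\{u\in V_{MAX}\ :\ v^*(u)=s(u)\}$, i.e.\ when $v^*(c(u))\le s(u)$. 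Thus the whole lemma reduces to proving $x_1\in O^*$, and by hypothesis $O^*\neq\emptyset$.

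Two structural facts will do the work. First I would record a mild strengthening of Lemma~\ref{lem.opencycle}: every $w\in O^*$ is also optimal to open in $G[\sigma]$, with $v(w)=v^*(w)=s(w)$. Indeed $w\in O^*$ gives $v^*(c(w))\le s(w)$, hence $v(c(w))\le v^*(c(w))\le s(w)$ by Lemma~\ref{lem.subgame}; as $w$ is either opened by $\sigma$ or free in $G[\sigma]$, its value obeys $v(w)=\max(s(w),v(c(w)))=s(w)$. Second I would prove a \emph{segment-locality} statement: if $A=(a,b)$ is a forward arc of the cycle whose far endpoint carries a matching capped value $v(b)=v^*(b)=s(b)$, and if no MAX vertex strictly inside $A$ lies in $O^*$ nor is opened by the $G[\sigma]$-strategy, then $v=v^*$ on $A$. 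This holds because on such an arc every vertex points forward (closed MAX or MIN) or to a sink, so the optimality equations restricted to $A$ form a self-contained system with boundary value $s(b)$; having no internal cycle, this system has a unique solution, and both $v$ and $v^*$ satisfy it.

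With these in hand I would split on whether $x_0\in O^*$. If $x_0\in O^*$, then an optimal strategy of $G$ that opens $x_0$ is feasible in $G[\sigma]$, so $v(G[\sigma])\ge v^*$ and hence $v=v^*$ everywhere; consequently every MAX vertex opened by the $G[\sigma]$-strategy, in particular $x_1$, lies in $O^*$. If $x_0\notin O^*$, let $w$ be the first vertex of $O^*$ met when walking forward from $x_0$; it exists and $w\neq x_0$. By the first fact $w$ is opened in $G[\sigma]$, so $x_1$, the first opened MAX vertex after $x_0$, cannot lie strictly beyond $w$, i.e.\ $x_1\in(x_0,w]$. Suppose $x_1\neq w$ and let $x_p$ be the last MAX vertex opened by $G[\sigma]$ strictly before $w$; then $x_p\in(x_0,w)$, so $x_p\notin O^*$, and the arc $(c(x_p),w)$ contains no $O^*$ vertex and no $G[\sigma]$-opened MAX vertex. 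Segment-locality gives $v(c(x_p))=v^*(c(x_p))$, while $x_p$ opened in $G[\sigma]$ forces $v(c(x_p))\le s(x_p)$ and $x_p\notin O^*$ forces $v^*(c(x_p))>s(x_p)$, a contradiction. Hence $x_1=w\in O^*$.

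The main obstacle I anticipate is the treatment of \emph{ties}, i.e.\ vertices where both opening and closing are optimal: such ties can desynchronise ``optimal to open in $G$'' from ``actually opened by the $G[\sigma]$-solver'', and could let the solver skip a tied $O^*$ vertex, so that $w$ fails to be opened and $x_p$ fails to exist. I would remove this by fixing the convention that the $G[\sigma]$-solver opens a vertex whenever opening is weakly optimal (which makes every $O^*$ vertex opened in $G[\sigma]$, exactly what the two facts need), or equivalently by a generic perturbation of the sink values making all comparisons strict. The only other delicate point is the uniqueness claim inside segment-locality, which must be argued from the far endpoint being capped, so that the restricted $\min/\max/\text{average}$ operators on $A$ admit a single fixed point.
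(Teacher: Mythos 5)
Your proof is correct and, despite the different packaging (the reduction to the set $O^*$, the case split on $x_0$, the contradiction via $x_p$), it runs on exactly the same engine as the paper's own proof: anchor at a vertex that is optimal to open in $G$ (your $w$, the paper's $\bar x = x_i$), use Lemmas~\ref{lem.subgame} and~\ref{lem.opencycle} to see that this vertex keeps its sink value in both $G$ and $G[\sigma]$, and propagate the equality of the two value vectors backward along the cycle (your ``segment-locality'') until $x_1$ is reached. Your explicit tie-breaking convention is a genuine and welcome refinement rather than a deviation: the paper's proof tacitly identifies ``optimal to open in $G[\sigma]$'' with ``opened by the strategy returned when solving $G[\sigma]$'', which is exactly the gap you flag, and your convention (open whenever opening is weakly optimal) is implementable when solving the acyclic game $G[\sigma]$ by backward induction, so both the lemma and the algorithm built on it remain intact.
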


\begin{proof}
  Let $\bar{x}$ be a MAX vertex that is open when solving $G$. From Lemma~\ref{lem.opencycle}, there is an index $i$ such that $x_i = \bar{x}$ (in particular there exists an open MAX vertex  when solving $G[\sigma]$). If $\ell = 1$ then $x_0 = \bar{x}$ so that the optimal strategies of $G[\sigma]$ and $G$ coincide. Otherwise, if $i = 1$, the result is immediate. At last, if $i>1$, $x_i$ has the same value in $G$ and $G[\sigma]$ (the value of its sink), and so has the vertex just before if it is different from $x_0$. Going backward from $x_i$ in the cycle, all the vertices until $x_0$ (not included) have the same value in $G$ and $G[\sigma]$. In particular, if $i>1$, this is the case for $x_1$ whose value is then the value of its sink. So it is optimal to open $x_1$ in $G$ as well.
\end{proof}


All in all, a linear algorithm that solves a strongly connected almost-acyclic SSG $G$ is:
\begin{enumerate}
\item Compute the values of the strategies closed everywhere. If optimal, return the values.
\item Else compute the optimal strategies of $G[\sigma_1]$ where $\sigma_1$ is a partial strategy open at a MAX vertex $x$; let $y$ be the first open MAX vertex after $x$; compute the values of $G[\sigma_2]$ where  $\sigma_2$ is the partial strategy open at $y$; if the local optimality condition is satisfied for $y$ in $G$, return the values.
\item Else apply the same procedure to any MIN vertex.
\end{enumerate}


\begin{theorem}
  \label{thm.1cycle}
  There is an algorithm to solve the value problem of a strongly connected almost-acyclic SSG in time $O(n)$ with $n$ the number of vertices. 
\end{theorem}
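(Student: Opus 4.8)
The plan is to establish both the correctness and the linear running time of the three-step algorithm displayed just before the statement. The backbone of the argument is that an optimal strategy pair necessarily falls into (at least) one of the three exhaustive classes $(i)$--$(iii)$: either the MAX strategy is closed at every positional vertex, or it is open at some MAX vertex, or the optimal MIN strategy is open at some MIN vertex. Since a strategy is either closed everywhere or open at some positional (MAX or MIN) vertex, these classes cover all cases, so it suffices to show that, assuming we are in a given class, the matching step produces the true value vector, and that a linear-time test recognizes when a step has succeeded.

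For the certification I would rely on the uniqueness part of Theorem~\ref{th:local_optimality_conditions}. Each step computes the exact values of some subgame $G[\sigma]$, which is acyclic because opening a single positional vertex breaks the unique cycle (the non-sink part being a single cycle by strong connectivity and POS-acyclicity), hence stopping; I then check whether this candidate vector satisfies the global local-optimality equations of $G$. By uniqueness, a candidate passing the test \emph{is} the value vector of $G$, so there are no false positives. For class $(i)$, Step 1 computes the everywhere-closed values directly through the geometric-series formula~(\ref{eq.value_cycle}) and verifies the MAX/MIN inequalities at positional vertices. For class $(ii)$, I would invoke Lemma~\ref{lem.1cycle}: after solving the acyclic game $G[\sigma_1]$ for an arbitrary partial strategy $\sigma_1$ open at a MAX vertex $x$, the first open MAX vertex $y$ after $x$ in cycle order is optimal to open in $G$; solving $G[\sigma_2]$ with $\sigma_2$ open at $y$ then yields the values, validated by checking the local-optimality condition at $y$, namely that the sink neighbour of $y$ dominates its cycle neighbour. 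Class $(iii)$ is handled symmetrically through the MIN-analogue of Lemma~\ref{lem.1cycle}, which holds since every statement for MAX-acyclic games transfers to the MIN side.

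For the complexity, each primitive is linear: the everywhere-closed values follow from~(\ref{eq.value_cycle}) and a single backward walk along the cycle; each of the constantly many subgames solved is acyclic and thus solvable in $O(n)$ by backward propagation from the sinks; and verifying the local-optimality conditions of $G$ on a candidate vector is one pass over the vertices. As the algorithm performs only a bounded number of such operations, the total running time is $O(n)$.

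The main obstacle is not the complexity bookkeeping but the correctness of the case analysis: I must argue that the single local-optimality check performed in each step is sound (via the uniqueness in Theorem~\ref{th:local_optimality_conditions}) and also sufficient to detect the class we are in, so that whenever the optimum lies in class $(ii)$ the vector produced by Lemma~\ref{lem.1cycle} genuinely passes the test. Crucially, Lemma~\ref{lem.1cycle} lets us locate an optimal open vertex after solving just \emph{one} acyclic subgame, rather than testing every open vertex; the naive alternative is precisely what would destroy linearity here and the FPT dependence later. A secondary technical point is that Theorem~\ref{th:local_optimality_conditions} is stated for stopping SSGs, whereas $G$ is only assumed strongly connected; I would note that the candidate vectors are true values of acyclic, hence stopping, subgames, so the fixed-point check against $G$ remains well posed.
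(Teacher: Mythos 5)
Your algorithm, case split, and complexity accounting are exactly the paper's: the same three exhaustive classes, the same use of Lemma~\ref{lem.1cycle} to locate an optimal open vertex after solving a single acyclic subgame, and the same count of constantly many linear-time primitives. The gap is in your certification step. You derive ``no false positives'' from the uniqueness part of Theorem~\ref{th:local_optimality_conditions} applied to $G$, and you yourself notice that this theorem requires $G$ to be stopping; but your patch does not repair the argument. That the candidate vectors are values of stopping (acyclic) subgames makes the check \emph{well posed}; soundness, however, needs uniqueness of the solution of the local optimality equations \emph{of $G$ itself}, and a strongly connected almost-acyclic SSG need not be stopping --- precisely when no average vertex lies on the cycle. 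In that case uniqueness genuinely fails: on a cycle of two MAX vertices whose sinks both have value $1/2$, every constant vector $c \in [1/2,1]$ on the cycle satisfies the local optimality equations, while the true value is $1/2$. Worse, your class-$(iii)$ check can accept a wrong vector: take the cycle $x \to y \to x$ where $x$ is a MAX vertex with a sink of value $0.3$ and $y$ is a MIN vertex with a sink of value $0.4$; the true values are $(0.3,0.3)$ (MIN closes, MAX opens), but the candidate obtained by opening $y$ is $(0.4,0.4)$, and it satisfies the local optimality conditions of $G$ at both vertices. So the symmetric soundness claim you make for all three checks is false as stated; if the class-$(iii)$ step were run first on this instance, it would return wrong values, and your proof gives no argument that exploits the ordering of the steps.

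The fix is what the paper does, half-implicitly, when it writes ``let $x$ be an average vertex (if none, the game can be solved in linear time).'' When the cycle contains an average vertex, every play under any pair of strategies either exits at an open vertex or passes through that average vertex infinitely often, hence reaches a sink almost surely; so $G$ \emph{is} stopping and your uniqueness argument is then valid for all three checks. When the cycle contains no average vertex, the game must be handled separately as a deterministic cycle game (alternatively, one can prove that the MAX-side checks of classes $(i)$--$(ii)$ are sound unconditionally --- via a supermartingale argument rather than uniqueness --- and that one of them always succeeds in this degenerate case, which is exactly why the order of the three steps matters). Your write-up needs this dichotomy, or an equivalent argument, to be complete.
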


\subsection{Fixed number of non acyclic average vertices}

Again, we assume that the SSG is strongly connected and POS-acyclic. The algorithm for almost-acyclic games can be generalized as follow.

Firstly, it is possible to compute the values of the strategies closed everywhere in polynomial time in $k_a$ and check if this strategy is optimal. Indeed the value of each fork vertex can be expressed as an affine function of the value of all fork vertices in the spirit of Eq.~(\ref{eq.value_cycle}). Then the linear system of size $k_a$ can be solved in polynomial time, and the value of the remaining vertices computed by going backward from each fork vertex. This shows that computing the values of a game once strategies are fixed is polynomial in the number of fork average vertices, which slightly improves the complexity of Lemma~\ref{lemma:precision}.

Otherwise, the following Lemma allows to find a positional vertex that is open for the optimal strategy. We say that a vertex $x$ is the last (resp. first)  vertex in a set $S$ before (resp. after) another vertex $y$ if there is a unique simple path from $x$ to $y$ (resp. $y$ to $x$) that does not contain any vertex in $S$, $x$ and $y$ being excluded.
\begin{lemma}
  \label{lem.cycles}
  Let $G$ be a strongly connected SSG with a set $A=\{a_1, \dots, a_\ell \}$ of fork average vertices and no positional fork vertices. Assume that the optimal strategy of $G$ is open at a MAX vertex. Let $\sigma$ be a partial strategy open at any vertex that is the last MAX vertex before a vertex in $A$. Let $S[\sigma]$ be the set of open MAX vertices when solving  $G[\sigma]$. Then, there exists $x \in S[\sigma]$ that satisfies
  \begin{itemize}
  \item it is optimal to open $x$ in $G$,
  \item $x$ is the first vertex in $S[\sigma]$ after a vertex in $A$.
  \end{itemize}
\end{lemma}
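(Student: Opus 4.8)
The plan is to adapt the argument of Lemma~\ref{lem.1cycle} to the branched setting, the new ingredient being a good understanding of the forced-path structure between forks. The crucial structural observation I would establish first is that, because $G$ is strongly connected once sinks are removed, every arc between two non-sink vertices lies on a cycle; consequently each positional vertex and each non-fork average vertex has a single cycle-successor, all of its other out-neighbours being sinks whose values are fixed once and for all (a non-fork average vertex thus contributes a constant $\frac12 \cdot(\text{sink value})$ term). Hence the non-sink vertices decompose into forced paths (segments) joining consecutive fork average vertices of $A$, and along such a segment the value of any vertex is determined by the value of its cycle-successor together with fixed sink values and the vertex's own MAX/MIN/AVE rule given by Theorem~\ref{th:local_optimality_conditions}.

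Next I would locate the open vertex guaranteed by hypothesis. Let $\bar x$ be a MAX vertex that the optimal strategy of $G$ opens. Exactly as in Lemma~\ref{lem.opencycle} (using Lemma~\ref{lem.subgame}), passing from $G$ to the subgame $G[\sigma]$ can only decrease the value of $\bar x$'s cycle-neighbour while leaving its sink value unchanged; hence if opening $\bar x$ is optimal in $G$ it remains optimal in $G[\sigma]$, so one may take $\bar x \in S[\sigma]$, and in particular $S[\sigma] \neq \emptyset$.

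Let $a_j$ be the last vertex of $A$ before $\bar x$, and write $a_j \to v_1 \to \cdots \to v_m = \bar x$ for the unique forced (fork-free) path joining them. Let $x$ be the first vertex of $S[\sigma]$ on this path, which exists since $v_m = \bar x \in S[\sigma]$; by construction $x$ is the first vertex of $S[\sigma]$ after the vertex $a_j \in A$, which is the second bullet. For the first bullet I would run a backward induction along $v_m, v_{m-1}, \dots, v_1$ proving that the value of $v_i$ is the same in $G$ and in $G[\sigma]$: the base case holds because $\bar x$ is open with its sink value in both games, and the inductive step rewrites the value of $v_i$ through the optimality equation from the (equal) value of $v_{i+1}$ and the fixed sink values. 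Writing $x = v_i$, openness of $x$ in $G[\sigma]$ shows that the value of $x$ equals the value of its sink and is at least $\val_{G[\sigma]}(v_{i+1}) = \val_G(v_{i+1})$; hence the MAX optimality condition for opening $x$ holds in $G$ as well, i.e.\ it is optimal to open $x$ in $G$.

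The delicate point, and the step I expect to demand the most care, is this backward value-equality induction, for two reasons. First, one must check that the vertices forced open by $\sigma$ do not interfere: the only forced-open vertex on the segment terminating at the next fork $a_{j'}$ is the last MAX vertex before $a_{j'}$, which lies at or beyond $\bar x$, so it does not disturb the portion $v_1, \dots, v_m$ that we traverse. Second, one must argue that the MAX/MIN choices made along the segment are consistent between $G$ and $G[\sigma]$ (up to ties), so that the optimality equations genuinely produce identical values; this is precisely where the ``fixed sink value'' structure of the non-cycle out-arcs, itself a consequence of strong connectivity, does the work. The remaining verifications, namely the existence and uniqueness of $a_j$ and of the forced path (which follow from $A \neq \emptyset$ together with strong connectivity and the fact that forks are the only branch points), and the treatment of ties in the definition of $S[\sigma]$, are routine once the segment structure is in place.
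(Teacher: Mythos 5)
Your proposal is correct and follows essentially the same route as the paper's proof: transfer the openness of $\bar{x}$ to $G[\sigma]$ via Lemma~\ref{lem.opencycle}, locate the last fork $a_j$ before $\bar{x}$, and propagate value equality between $G$ and $G[\sigma]$ backward along the forced (fork-free) path from $a_j$ to $\bar{x}$, concluding for the first vertex of $S[\sigma]$ on that path. Your backward induction and the check that the vertex forced open by $\sigma$ cannot lie strictly before $\bar{x}$ on this segment are exactly the steps the paper asserts tersely (``all the vertices between the successor of $a_i$ and $\bar{x}$ have the same value''), so your write-up is a faithful, and in fact more detailed, version of the same argument.
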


\begin{proof}
  Let $\bar{x}$ be a MAX vertex that is open when solving $G$, and  $i$ be such that $a_i$ is the last vertex in $A$ before $\bar{x}$. By Lemma~\ref{lem.opencycle}, $\bar{x}$ is open in $G[\sigma]$ and since $\sigma$ is not open at a MAX vertex between $a_i$ and $\bar{x}$, all the vertices between the successor of $a_i$  and $\bar{x}$ have the same value in $G[\sigma]$ and $G$, and then the optimal strategy of $G[\sigma]$ at these vertices is optimal for $G$. This property holds in particular for the first vertex in $S[\sigma]$ after $a_i$ in the path leading to $\bar{x}$.
\end{proof}
The Lemma is illustrated on Figure~\ref{fig:lemma}.

\begin{figure}
  \begin{center}
    \begin{tikzpicture}[scale=0.8]      
      \tikzstyle{block} = [rectangle, draw, fill=blue!20, text width=2em, text centered, rounded corners, minimum height=2em]
      \tikzstyle{aver} = [circle, draw, fill=red!20, text centered, minimum height=2em]
      \tikzstyle{final} = [rectangle, draw, fill=yellow!30, text width=2em, text centered, rounded corners, minimum height=2em]

      \node[block] (M1) at (0,0) {$\mbox{max}_1$} ;
      \node[block] (a2) at (2,0) {$\dots$} ;
      \node[block] (M2) at (4,0) {$\mbox{max}_k$} ;
      \node[final] (1) at (0,2) {$s_1$} ;
      \node[final] (2) at (2,2) {$\dots$} ;
      \node[final] (3) at (4,2) {$s_k$} ;
      \node[aver] (a3) at (0,-2) {$a_1$} ;
      \node[aver] (a5) at (4,-2) {$a_2$} ;
      \node[aver] (a4) at (2,-2) {$\dots$} ;

      \draw[->,line width =1pt] (a5) edge (a4) ; 
      \draw[->,line width =1pt] (a4) edge (a3) ; 
      \draw[->,line width =1pt] (a3) edge (M1) ; 
      \draw[->,line width =1pt] (M1) edge (1) ; 
      \draw[->,line width =1pt] (M1) edge (a2) ; 
      \draw[->,line width =1pt] (M2) edge (3) ; 
      \draw[->,line width =1pt] (M2) edge (a5) ; 
      \draw[->,line width =1pt] (a2) edge (M2) ; 
      \draw[->,line width =1pt] (a2) edge (2) ; 
      \draw[loop below,line width =1pt, dotted] (a3) edge  (a3);
      \draw[loop below,line width =1pt, dotted] (a5) edge  (a5);
      \draw[loop below,line width =1pt, dotted] (a4) edge  (a4);
\end{tikzpicture}
  \end{center}
  \caption{Illustration of Lemma~\ref{lem.cycles}: $a_1$ and $a_2$ are average fork vertices. Vertices on the top are MAX vertices, labelled from 1 to $k$, that lead to sinks. $\max_k$ is the last MAX vertex before a fork vertex. Assume it is optimal to open at least one MAX vertex. Then, the first open MAX vertex (wrt to the labelling) of the optimal strategy of the subgame open at $\max_k$ is open as well in the optimal strategy of the initial game.}
  \label{fig:lemma}
\end{figure}
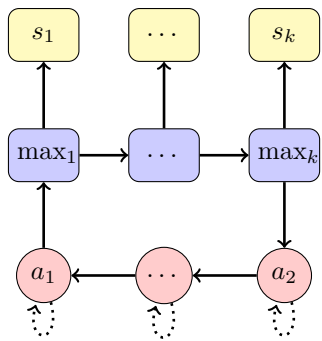

Finally, if the optimal strategy is open at some MAX vertex, then the following algorithm can be run to compute the values of $G$:
\begin{enumerate}
\item Let $x$ be the last MAX vertex before some fork vertex, and $\sigma_1$ the partial strategy open at $x$.  $G[\sigma_1]$ is an SSG that has $k_a - 1$ fork vertices (recall that $G$ is strongly connected). When solved, it provides a set $S[\sigma_1]$ of open MAX vertices. There are at most $k_a + k_a -1$ vertices that are the first in $S[\sigma_1]$ after a fork vertex. Then, from Lemma~\ref{lem.cycles}, it is optimal to open at least one of them in $G$. 
\item For each $y $ that is the first in $S[\sigma_1]$ after a fork vertex:
  \begin{enumerate}
  \item compute the values of $G[\sigma_2]$, $\sigma_2$ being the partial strategy open at $y$,
  \item if local optimality condition is satisfied for $y$ in $G$, return the values.
  \end{enumerate}
\end{enumerate}

This algorithm computes at most $2k_a$ SSGs with $k_a-1$ average fork vertices. In the worst case, the same algorithm must be run for the MIN vertices. Using theorem~\ref{thm.1cycle} for the case $k_p=k_a=0$, we obtain Theorem \ref{choucroute} and its corollary.

\section{Feedback vertex set}\label{sec:feedback}

A feedback vertex set is a set of vertices in a directed graph such that removing them
yields a DAG. Computing a minimal vertex set is an $\NP$-hard problem~\cite{gary1979computers}, but it 
can be solved with a FPT algorithm~\cite{chen2008fixed}. 
Assume the size of the minimal vertex set is fixed, we prove in this section that we can find 
the optimal strategies in polynomial time.
Remark that, to prove such a theorem, we cannot use the result on bounded number of cycles since a DAG plus one vertex may have an exponential number of cycles. 
Moreover a DAG plus one vertex may have a large number of positional vertices with several arcs in a cycle, thus we cannot use the algorithm to solve MAX-acyclic
plus a few non acyclic MAX vertices.

The method we present works by transforming $k$ vertices into sinks
and could thus be used for other classes of SSGs. For instance, it could solve in polynomial time the value problem for games which are MAX-acyclic when $k$ vertices are removed.

\subsection{The dichotomy method}

We assume from now on that all SSGs are stopping.
In this subsection, we explain how to solve an SSG by solving 
it several times but with one vertex less.

First we remark that turning any vertex into a sink of its own value in the original game
does not change any value. 

\begin{lemma}\label{lemma:sink}
Let $G$ be an SSG and $x$ one of its vertex. 
Let $G'$ be the same SSG as $G$ except that $x$ has been turned into a sink
vertex of value $\val_G(x)$. For all vertices $y$, $\val_G(y) = \val_{G'}(y)$.
\end{lemma}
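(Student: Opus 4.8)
The plan is to use the characterization of the value vector as the unique solution of the local optimality conditions (Theorem~\ref{th:local_optimality_conditions}), which applies because we have assumed all SSGs are stopping. Define the vector $w$ on the vertex set of $G'$ (which has the same vertices as $G$) by $w(y) := \val_G(y)$ for every $y$. I will show that $w$ satisfies the optimality conditions \emph{for the game} $G'$; since $G'$ is stopping, uniqueness then forces $w(y) = \val_{G'}(y)$ for all $y$, which is exactly the claim.

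First I would check that $G'$ is stopping, so that the uniqueness theorem is applicable to it. Any play in $G'$ either eventually visits $x$, in which case it is absorbed at the new sink $x$, or it never visits $x$, in which case it only uses vertices and arcs that are unchanged from $G$; since $G$ is stopping, such a play reaches an original sink with probability $1$. Either way every play reaches a sink almost surely, so $G'$ is stopping. Note that turning $x$ into a sink only deletes the outgoing arcs of $x$ (replacing them by a self-loop) while preserving all arcs incident to the other vertices, and in particular every arc $(y,x)$ pointing into $x$ is kept.

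Next I would verify the four optimality conditions for $w$ in $G'$, splitting on the role of each vertex. For a vertex $y \neq x$, its type ($V_{MAX}$, $V_{MIN}$, $V_{AVE}$, or original $V_{SINK}$) and its full set of outgoing arcs are identical in $G$ and $G'$; moreover $w$ agrees with $\val_G$ on all neighbours (including $x$, whose assigned value in $G'$ is precisely $\val_G(x) = w(x)$). Hence the max/min/average/constant equation satisfied by $\val_G$ at $y$ in $G$ (Theorem~\ref{th:local_optimality_conditions} applied to $G$) is, verbatim, the corresponding equation for $w$ at $y$ in $G'$. For the vertex $x$ itself, in $G'$ it is a sink of prescribed value $\val_G(x)$, so the sink condition reads $w(x) = \val_G(x)$, which holds by definition of $w$. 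Thus all conditions are met.

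Since $G'$ is stopping and $w$ satisfies the optimality conditions of $G'$, the uniqueness part of Theorem~\ref{th:local_optimality_conditions} gives $w = (\val_{G'}(y))_y$, i.e. $\val_{G'}(y) = \val_G(y)$ for every $y$. The only delicate point, and the one I would be most careful about, is the verification that $G'$ is genuinely stopping: this is what licenses invoking uniqueness on $G'$ rather than merely knowing that $\val_G$ is \emph{a} fixed point of the $G'$ equations. Everything else is a direct transcription of the equations satisfied by $\val_G$, exploiting that the change from $G$ to $G'$ affects neither the arcs out of the other vertices nor the value attached to $x$.
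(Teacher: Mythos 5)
Your proof is correct and follows essentially the same route as the paper's: both identify $\val_G$ as a solution of the local optimality conditions of $G'$ and invoke the uniqueness guaranteed by Theorem~\ref{th:local_optimality_conditions} for stopping games. Your explicit check that $G'$ is itself stopping is a detail the paper leaves implicit; it is a welcome refinement rather than a divergence.
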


\begin{proof}
 The optimality condition of Theorem~\ref{th:local_optimality_conditions} are exactly the same in 
 $G$ and $G'$. Since the game is stopping, there is one and only one solution to these equations and
 thus the values of the vertices are identical in both games.
\end{proof}

The values in an SSG are monotone with regards to the values of the sinks, as proved in the next lemma.
\begin{lemma}\label{lemma:increasing}
 Let $G$ be an SSG and $s$ one of its sink vertex. 
 Let $G'$ be the same SSG as $G$ except that the value of $s$ has been increased. 
 For all vertices $x$, $\val_G(x) \leq \val_{G'}(x)$.
\end{lemma}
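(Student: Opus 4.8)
The plan is to prove the monotonicity directly at the level of a fixed pair of strategies, and then lift it to the game value via the minimax characterization (\ref{eq.value_function}). The key observation is that $G$ and $G'$ share exactly the same underlying graph, the same partition of the vertex set, and the same coin-tossing mechanism at AVE vertices; they differ only in the rational value assigned to the single sink $s$. Consequently, for any fixed pair of strategies $(\sigma, \tau)$ the probability measure $\Proba_{\sigma,\tau}^x$ on plays is literally identical in $G$ and in $G'$, so the two games can be compared play-by-play on one common probability space.

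First I would fix an arbitrary pair $(\sigma,\tau)$ and let $X$ and $X'$ denote the random variable ``value of the sink reached by the play'' in $G$ and in $G'$ respectively. On an individual play these two variables differ only when the play reaches $s$: if the play reaches $s$ then $X' = \val_{G'}(s) \geq \val_G(s) = X$, while on every other play (reaching another sink, or no sink) one has $X' = X$. Hence $X' \geq X$ holds pointwise, and integrating against the common measure gives, for every vertex $x$, \[ \val_{\sigma,\tau}^{G'}(x) = {\mathbb E}^x_{\sigma,\tau}(X') \geq {\mathbb E}^x_{\sigma,\tau}(X) = \val_{\sigma,\tau}^{G}(x). \]

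It then remains to propagate this pointwise inequality through the $\max$--$\min$ operators. For each fixed $\sigma$, minimizing the larger function over $\tau$ yields the larger result, so $\min_\tau \val_{\sigma,\tau}^G(x) \leq \min_\tau \val_{\sigma,\tau}^{G'}(x)$; taking the maximum over $\sigma$ preserves the inequality, and by (\ref{eq.value_function}) this is precisely $\val_G(x) \leq \val_{G'}(x)$, as desired. (Alternatively one could argue via the optimality conditions of Theorem~\ref{th:local_optimality_conditions}, using that the associated value operator is monotone and that its unique fixed point depends monotonically on the boundary data at the sinks; but the play-by-play argument is more transparent.)

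There is no serious obstacle in this argument: the only point demanding care is the very first one, namely recognizing that modifying a sink value alters neither the set of available strategies nor the induced distribution over plays, which is what licenses the pointwise comparison of $X$ and $X'$ on a single space. Since the games are assumed stopping, a sink is reached almost surely and the ``value $0$ otherwise'' convention never intervenes; I would remark in passing that the pointwise inequality $X' \geq X$ in fact holds even without the stopping hypothesis.
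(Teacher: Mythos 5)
Your proof is correct and follows essentially the same route as the paper: fix an arbitrary pair of strategies, observe that the induced distribution over plays (equivalently, the probabilities of reaching each sink) is identical in $G$ and $G'$, compare the resulting values sink-by-sink, and then lift the inequality through the $\max$--$\min$ operators of (\ref{eq.value_function}). Your write-up is slightly more explicit than the paper's (which compresses the last step into ``since the inequality is true for every pair of strategies, the lemma is proved''), and your closing remark that the pointwise comparison needs no stopping hypothesis is accurate.
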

\begin{proof}
 Let fix a pair of strategy $(\sigma,\tau)$ and a vertex $x$.
 We have: $$\val_{(\sigma,\tau),G}(x) = \sum_{y \in V_{SINK}}\pro{x \rightsquigarrow y} \val_G(y)$$
         $$\val_{(\sigma,\tau),G}(x) \leq  \sum_{y \in V_{SINK}} \pro{x \rightsquigarrow y} \val_{G'}(y) = \val_{(\sigma,\tau),G'}(x)$$
 because $\val_G(x) = \val_{G'}(x)$  except when $x = s$, $\val_G(s) \leq \val_{G'}(s)$. Since the inequality is true for every pair of strategies and every vertex, the lemma is proved. 
\end{proof}

Let $x$ be an arbitrary vertex of $G$ and  let $G[v]$  be the same SSG, except that $x$ becomes a SINK vertex of value $v$.
We define the function $f$ by: 

$$\left\{ \begin{array}{l l}
        \mbox{if x is a MAX vertex,}  & f(v)= \max\{\val_{G[v]}(y) : (x,y) \in A\} \\
        \mbox{if x is a MIN vertex,}  & f(v)=\min\{\val_{G[v]}(y): (x,y) \in A\}   \\
        \mbox{if x is an AVE vertex,} & f(v)=\frac{1}{2}\val_{G[v]}(x^1) + \val_{G[v]}(x^2) \\
       \end{array} \right.$$

\begin{lemma}\label{lemma:dichotomy}
There is a unique $v_0$ such that $f(v_0) = v_0$ which is $v_0 = \val_G(x)$.
Moreover, for all $v > v_0$, $f(v_0) < v_0$ 
and for all $v < v_0$, $f(v_0) > v_0$.
\end{lemma}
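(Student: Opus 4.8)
The plan is to establish three facts about $f$ and then combine them: that $v_0=\val_G(x)$ is a fixed point of $f$, that $f$ is nondecreasing and $1$-Lipschitz, and that $v_0$ is the only fixed point; the strict inequalities fall out at the end. First I would check that $f(v_0)=v_0$. By Lemma~\ref{lemma:sink}, turning $x$ into a sink of value $v_0=\val_G(x)$ changes no value, so $\val_{G[v_0]}(y)=\val_G(y)$ for every vertex $y$, and in particular for every out-neighbour of $x$. Substituting this into the definition of $f$ and invoking the local optimality condition at $x$ from Theorem~\ref{th:local_optimality_conditions} (the $\max$, $\min$, or average of the neighbour values equals $\val_G(x)$, according to the type of $x$), I obtain $f(v_0)=\val_G(x)=v_0$ uniformly in the three cases.

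Next I would prove that $f$ is nondecreasing and $1$-Lipschitz. Monotonicity is immediate from Lemma~\ref{lemma:increasing}: if $v\le v'$ then $G[v']$ is obtained from $G[v]$ by raising the value of the sink $x$, so $\val_{G[v]}(y)\le\val_{G[v']}(y)$ for all $y$, and $f$, being a $\max$, $\min$, or average of these quantities, inherits the inequality. For the Lipschitz bound I would use the affine decomposition of values under fixed strategies: for any pair $(\sigma,\tau)$, since $G[v]$ is stopping and only the value of the sink $x$ depends on $v$,
\[ \val_{(\sigma,\tau),G[v]}(y) = \pro{y \rightsquigarrow x}\, v + c_y^{\sigma,\tau}, \]
where the absorption probability $\pro{y \rightsquigarrow x}\in[0,1]$ and the constant $c_y^{\sigma,\tau}$ are independent of $v$. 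Each such function is $1$-Lipschitz in $v$, and since $\val_{G[v]}(y)$ is a finite $\max$--$\min$ of them and $f$ a further $\max$/$\min$/average, the map $f$ is $1$-Lipschitz. Consequently $g(v):=v-f(v)$ is nondecreasing, with $g(v_0)=0$.

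Finally I would show $v_0$ is the unique fixed point, which is where the real work lies. Suppose $f(v)=v$. Then the vector $w$ defined by $w(y)=\val_{G[v]}(y)$ satisfies every local optimality condition of the original game $G$: at each $y\neq x$ the condition is literally the same in $G$ and $G[v]$, while at $x$ the equality $w(x)=v=f(v)$ says precisely that $w(x)$ is the $\max$/$\min$/average of its neighbour values, which is $G$'s condition at $x$. Since $G$ is stopping, Theorem~\ref{th:local_optimality_conditions} forces $w=(\val_G(y))_y$, hence $v=w(x)=\val_G(x)=v_0$. Combining this with the previous step finishes the proof: for $v>v_0$ we have $g(v)\ge g(v_0)=0$ and $g(v)\neq 0$ (no other fixed point), so $g(v)>0$, i.e.\ $f(v)<v$; symmetrically $f(v)>v$ for $v<v_0$. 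The main obstacle is precisely this uniqueness argument. Monotonicity alone yields only the weak inequalities, and a merely nondecreasing $f$ with a single fixed point could still satisfy $f(v)>v$ above $v_0$; it is the combination of the $1$-Lipschitz bound (which rules this out) with the stopping-uniqueness of Theorem~\ref{th:local_optimality_conditions} (which rules out any further fixed point) that upgrades the weak inequalities to strict ones.
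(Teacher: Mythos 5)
Your proof is correct, and its core is the same as the paper's: existence of the fixed point via Lemma~\ref{lemma:sink} plus the local optimality condition at $x$, uniqueness by observing that any fixed point of $f$ makes the value vector of $G[v]$ satisfy all local optimality conditions of $G$ and invoking the stopping-game uniqueness of Theorem~\ref{th:local_optimality_conditions}, and monotonicity of $f$ from Lemma~\ref{lemma:increasing}. Where you diverge is the last step. The paper concludes directly from ``$f$ nondecreasing, $f(0)\geq 0$, $f(1)\leq 1$, unique fixed point,'' while you interpolate a $1$-Lipschitz bound, proved via the (valid) affine decomposition $\val_{(\sigma,\tau),G[v]}(y)=\pro{y \rightsquigarrow x}\,v+c_y^{\sigma,\tau}$ under fixed strategies, so that $v-f(v)$ is nondecreasing and uniqueness upgrades the weak inequalities to strict ones. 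This works and is a pleasantly quantitative way to finish. However, your closing justification overstates its necessity: a nondecreasing self-map $f$ of $[0,1]$ with $f(v)>v$ for some $v>v_0$ always has a fixed point in $[v,1]$ --- indeed $u\geq v$ implies $f(u)\geq f(v)>v$, so $f$ maps the complete lattice $[v,1]$ into itself and Knaster--Tarski applies, no continuity needed --- so monotonicity together with uniqueness of the fixed point already forces $f(v)<v$ above $v_0$ (and symmetrically below). Hence the Lipschitz detour is a sufficient but not necessary ingredient; the paper's terser argument is sound as stated, and your proof is a slightly longer but equally valid route to the same conclusion.
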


\begin{proof} 
The local optimality conditions given in Theorem~\ref{th:local_optimality_conditions} are the same in $G$ and $G[v]$ except the equation $f(\val_G(x)) = \val_G(x)$. Therefore, when $f(v_0) = v_0$, the values of $G[v]$ satisfy all the local optimality conditions of $G$. Thus $v_0$ is the value of $s$ in $G$. Since the game is stopping there is at most one such value.
 
 Conversely, let $v_0$ be the value of $s$ in $G$. 
 By Lemma \ref{lemma:sink}, the values in $G[v_0]$ are the same as in $G$ for all vertices. Therefore the local optimality conditions 
 in $G$ contains the equation $f(v_0) = v_0$.  
 
 We have seen that $f(v) = v$ is true for exactly one value of $v$. 
 Since the function $f$ is increasing by Lemma \ref{lemma:increasing} and because $f(0) \geq 0$ and $f(1) \leq 1$,
 we have for all $v > v_0$, $f(v_0) < v_0$ and for all $v < v_0$, $f(v_0) > v_0$.
\end{proof}

The previous lemma allows to determine the value of $x$ in $G$
by a dichotomic search by the following algorithm. 
We keep refining an interval $[min,max]$ which contains the value of $x$, 
with starting values $min =0$ and  $max = 1$.

\begin{enumerate}
 \item While $max - min  \leq 6^{-n_a}$ do:
   \begin{enumerate}
   \item $v = (min + max)/2$
   \item Compute the values of $G[v]$
   \item If $f(v) > v$ then $min = v$
   \item If $f(v) < v $ then $max =v$
   \end{enumerate}
 \item Return the unique rational in $[min, max]$ of denominator less than $6^{-\frac{n_a}{2}}$	 
\end{enumerate}

\begin{theorem}\label{th:dichotomy_complexity}
 Let $G$ be an SSG with $n$ vertices and $x$ one of its vertex. Denote by $C(n)$ the complexity to solve 
 $G[v]$, then we can compute the values of $G$ in time $O(nC(n))$. In particular an SSG which can be turned into a 
 DAG by removing one vertex can be solved in time $O(n^2)$.
\end{theorem}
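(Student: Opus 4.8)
The plan is to analyze the dichotomic search described just above the statement and to show that it terminates after $O(n)$ calls to the subroutine that solves $G[v]$. First I would establish the loop invariant that the target value $v_0 = \val_G(x)$ always lies in $[min,max]$. This follows directly from Lemma~\ref{lemma:dichotomy}: when $f(v) > v$ we are strictly below the fixed point, so $v < v_0$ and the update $min \leftarrow v$ preserves the invariant; symmetrically $f(v) < v$ forces $v > v_0$, so $max \leftarrow v$ is safe; and $f(v) = v$ means $v = v_0$ exactly. Each iteration computes $f(v)$, which by its very definition requires only solving $G[v]$ (cost $C(n)$) followed by a $\max$, a $\min$, or an average taken over the out-neighbours of $x$, i.e. $O(1)$ additional arithmetic operations and comparisons.

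Second, I would bound the number of iterations using the precision estimate of Lemma~\ref{lemma:precision}. That lemma guarantees $v_0$ is a rational $a/b$ with $b \leq 6^{n_a/2}\,q$, so any two admissible values are separated by at least $1/(6^{n_a}q^2)$. Since each step halves the length of $[min,max]$, starting from length $1$, after $O(n_a) = O(n)$ halvings the interval is shorter than this separation and hence contains a single candidate rational; this is exactly the threshold $6^{-n_a}$ used in the algorithm (up to the factor depending on $q$). The dichotomy therefore performs $O(n)$ calls to the $G[v]$ solver, giving $O(n\,C(n))$ operations in total. The exact value is then recovered from the final interval by extracting the unique fraction of denominator at most $6^{n_a/2}q$ it contains, e.g. by a continued-fraction (Stern--Brocot) search, in $O(n)$ arithmetic steps, which is dominated by the search cost. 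By Lemma~\ref{lemma:sink}, replacing $x$ by a sink of value $v_0$ leaves every other value unchanged, so all values of $G$ are obtained within the claimed bound.

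For the concluding clause, observe that if deleting $x$ from $G$ yields a DAG, then after $x$ is turned into a sink the game $G[v]$ is acyclic for every $v$, since the only vertex that could have closed a cycle has become a sink. An acyclic SSG is solved in linear time by backward propagation of the local optimality conditions of Theorem~\ref{th:local_optimality_conditions} from the sinks, so $C(n) = O(n)$ and the overall running time is $O(n^2)$.

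The step I expect to be the main obstacle is the precision argument tying Lemma~\ref{lemma:precision} to the number of halvings: one must check that the stopping width $6^{-n_a}$ (and the factor hidden in the common denominator $q$) is genuinely small enough to isolate $v_0$ among \emph{all} rationals of bounded denominator, and that the final exact-recovery step indeed stays within the $O(n\,C(n))$ budget rather than dominating it.
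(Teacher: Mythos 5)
Your proposal is correct and follows essentially the same route as the paper: the invariant $v_0 \in [min,max]$ from Lemma~\ref{lemma:dichotomy}, the $O(n_a)$ bound on halvings via the denominator bound of Lemma~\ref{lemma:precision}, exact recovery by a Stern--Brocot search, one final call to $G[v_0]$ (justified by Lemma~\ref{lemma:sink}), and linear-time solution of the acyclic $G[v]$ for the DAG case. Your explicit tracking of the sink-denominator factor $q$ in the separation bound is if anything slightly more careful than the paper's write-up, which states the threshold for $q=1$.
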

\begin{proof}
Let $v_0$ be the value of $x$ in $G$, which exists since the game is stopping.
 By Lemma~\ref{lemma:dichotomy} it is clear that the previous algorithm is such that $v_0$ is in the interval $[min,max]$ at any time. Moreover, by Lemma~\ref{lemma:precision} we know that $v_0 = \frac{a}{b}$ where $b \leq 6^{\frac{n_a}{2}}$.
 At the end of the algorithm, $max - min  \leq 6^{-n_a}$ therefore there is at most one rational of denominator less than
 $6^{\frac{n_a}{2}}$ in this interval. It can be found exactly with $O(n_a)$ arithmetic operations by doing a binary search in the Stern-Brocot tree (see for instance~\cite{graham1989patashnik}).
 
 One last call to $G[v_0]$ gives us all the exact values of $G$. 
 Since the algorithm stops when $max - min  \leq 6^{-n_a}$, we have at most $ O(n_a)$ calls
 to the algorithm solving $G[v]$. All in all the complexity is $O(n_aC(n) + n_a)$ that is $O(n_aC(n))$.

In the case where $G[v]$ is an acyclic graph, we can solve it in linear time which gives us the
stated complexity.
\end{proof}

\subsection{Feedback Vertex Set of Fixed Size }

Let $G$ be an SSG such that $X$  is one of its minimal vertex feedback set.
Let \mbox{$k = |X|$}. The game is assumed to be stopping. Since the classical 
transformation~\cite{condon1992complexity} into a stopping game does not change the size of a minimal vertex feedback set,
it will not change the polynomiality of the described algorithm. However the transformation produces an SSG which is quadratically larger, thus a good way to improve the algorithm we present would be to relax the stopping assumption.

In this subsection we will consider games whose sinks have dyadic values,
since they come from the dichotomy of the last subsection. The gcd
of the values of the sinks will thus be the maximum of the denominators.
The idea to solve $G$ is to get rid of $X$, one vertex at a time by the previous technique.
The only thing we have to be careful about is the precision up to which we have to do the 
dichotomy, since each step adds a new sink whose value has a larger denominator.

\begin{theorem} \label{th:feedback}
There is an algorithm which solves any stopping SSG in time $O(n^{k+1})$
where $n$ is the number of vertices and $k$ the size of the minimal feedback vertex set. 
\end{theorem}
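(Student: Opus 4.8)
**

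The plan is to apply the dichotomy technique of Theorem~\ref{th:dichotomy_complexity} iteratively, eliminating the vertices of the feedback vertex set $X = \{x_1, \dots, x_k\}$ one at a time. After removing all $k$ vertices, the remaining graph is a DAG, which can be solved in linear time. The subtlety, as flagged in the text, is controlling the precision of the dichotomy: each elimination turns a vertex into a sink whose value is a dyadic rational with a denominator that may be larger than those already present, so I must track how the required precision grows through the $k$ nested dichotomies and verify that the total cost stays $O(n^{k+1})$.

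First I would set up the recursion. Let $C_j(n)$ denote the cost of solving a game in which $j$ of the feedback vertices have already been turned into sinks (so only $j$ genuine cycles' worth of structure remains to be eliminated before reaching a DAG). When $j = k$, all feedback vertices are sinks, the graph is acyclic, and by the backward-propagation argument for acyclic SSGs the game is solved in time $O(n)$; thus $C_k(n) = O(n)$. For $j < k$, I apply Theorem~\ref{th:dichotomy_complexity} to eliminate one further feedback vertex $x_{j+1}$: this performs a dichotomic search, each step of which requires solving a game with $x_{j+1}$ fixed to a sink value, i.e.\ a game counted by $C_{j+1}(n)$. The number of dichotomy steps is governed by Lemma~\ref{lemma:precision}, which bounds the denominator of any vertex value by $6^{n_a/2} \cdot q$, where $q$ is the common denominator of the current sink values.

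The main obstacle is exactly this denominator bookkeeping. At the top level the sinks are dyadic with denominator at most some $q_0$; after the first elimination the new sink $x_1$ receives a dyadic value whose denominator is at most $6^{n_a/2} \cdot q_0$ (this is where I use that the dichotomy terminates once the interval has width below $6^{-n_a}$ and then snaps to the unique rational of bounded denominator via the Stern-Brocot search). Iterating, after $j$ eliminations the common denominator $q_j$ of the sinks satisfies $q_j \leq (6^{n_a/2})^{j} \cdot q_0 \leq 6^{n_a k / 2} \cdot q_0$, which is still bounded by $2^{O(n^2)}$, hence the number of arithmetic operations in each dichotomy is $O(n_a)$ just as in Theorem~\ref{th:dichotomy_complexity} and the bit-length of the numbers remains polynomially controlled. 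The key point is that although the denominators compound multiplicatively, $k$ is a constant and each factor is only singly exponential in $n_a$, so the precision required at every level is still $O(n)$ bits' worth of dichotomy steps and does not explode.

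Putting the pieces together yields the recurrence $C_j(n) = O(n) \cdot C_{j+1}(n)$ with base case $C_k(n) = O(n)$, where the $O(n)$ prefactor is the number of dichotomy steps (bounded by $O(n_a)$) from Theorem~\ref{th:dichotomy_complexity}. Unrolling this recursion over the $k$ eliminations gives
\[
C_0(n) = O(n)^{k} \cdot O(n) = O(n^{k+1}),
\]
which is the claimed bound. Since computing the minimal feedback vertex set of fixed size is itself FPT~\cite{chen2008fixed} and the stopping transformation preserves the size of the feedback vertex set (as noted in the text), the overall procedure runs in time $O(n^{k+1})$, completing the proof.
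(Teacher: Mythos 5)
Your overall architecture is the same as the paper's: find the feedback vertex set, eliminate its vertices one at a time by the dichotomy of Theorem~\ref{th:dichotomy_complexity}, solve the acyclic game at the leaves in linear time, and unroll the recurrence $C_j(n) = O(n)\cdot C_{j+1}(n)$. The gap is in the precision bookkeeping, which is exactly the step both you and the paper identify as the crux. You claim that after $j$ eliminations the sink denominators satisfy $q_j \leq (6^{n_a/2})^j q_0$, i.e.\ growth by a factor $6^{n_a/2}$ per level. This is wrong on two counts. First, the values handed down to the recursive calls are not the exact value of the eliminated vertex (a rational of denominator at most $6^{n_a/2}q_{j-1}$, which in general is \emph{not} dyadic); they are the dyadic midpoints produced by the dichotomy, whose denominator is $2$ raised to the number of dichotomy steps. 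Second, to isolate a rational of denominator at most $D$ by binary search plus the Stern--Brocot snap, the interval must be narrowed below $1/D^2$, since two distinct rationals with denominators at most $D$ can be as close as $1/D^2$; hence the dichotomy runs roughly $2\log D$ steps and the midpoint denominators reach $D^2$. The correct recursion is therefore $q_{j+1} = \bigl(6^{n_a/2}q_j\bigr)^2 = 6^{n_a}q_j^2$, giving denominators \emph{doubly} exponential in $j$ (the paper's $p_i = 6^{(2^{i+1}-1)n_a}$), not singly exponential as you claim. Concretely, if the dichotomies were run only to the precision your bound prescribes, already at the second level the final interval could contain two candidate rationals and the snap could return a wrong value, so the algorithm as you analyze it would be incorrect.

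The final bound nevertheless survives, and this is precisely what the paper's induction establishes: although the $q_j$ are doubly exponential in $j$, the number of dichotomy steps at level $j$ is governed by $\log q_{j+1} = O(2^j n_a)$, which is still $O(n)$ for fixed $k$; the product over the $k$ levels is at most $2^{k^2}\log(6)^k n_a^k = O(n^k)$ calls to the linear-time acyclic solver, whence $O(n^{k+1})$. So your conclusion and even your recurrence are right, but only after replacing your denominator estimate by the squared recursion and redoing the induction on $j$ with the doubly exponential bound.
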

\begin{proof}
 First recall that we can find a minimal vertex  with an FPT algorithm. You can also check every set of size $k$ 
 and test in linear time whether it is a feedback vertex set. Thus the complexity of finding such a set, that we denote by $X = \{x_1,\dots,x_k\}$, is at worst 
 $O(n^{k+1})$.
 Let denote by $G_i$ the game $G$ where $x_1$ to $x_i$ has been turned 
 into sinks of some values. If we want to make these values explicit we write  $G_i[v_1,\dots,v_i]$ where $v_1$ 
 to $v_i$ are the values of the sinks.
 
 We now use the algorithm of Theorem~\ref{th:dichotomy_complexity} recursively, that is we apply it 
 to reduce the problem of solving $G_i[v_1,\dots,v_i]$ to  the problem of solving $G_{i+1}[v_1,\dots,\\v_i,v_{i+1}]$ for several 
 values of $v_{i+1}$. Since $G_k$ is acyclic, it can be solved in linear time. Therefore the only thing we have to evaluate
 is the number of calls to this last step. To do that we have to explain how precise should be the dichotomy to solve $G_i$,
 which will give us the number of calls to solve $G_i$ in function of the number of calls to solve $G_{i+1}$.

 We prove by induction on $i$ that the algorithm, to solve $G_{i}$, makes $\log(p_i) $ calls to solve $G_{i+1}$,
 where the value $v_{i+1}$ is a dyadic number of numerator bounded by $p_{i} = 6^{(2^{i+1}-1)n_a}$.
 Theorem~\ref{th:dichotomy_complexity} proves the case $i = 0$. 
 Assume the property is proved for $i-1$, we prove it for $i$. By induction hypothesis, all the
 denominators of $v_1, \dots, v_{i}$ are power of two and their gcd is bounded by $p_i$.
 By Lemma~\ref{lemma:precision}, the value of $x_{i}$ is a rational of the form $\frac{a}{b}$ where $b \leq p_i6^{\frac{n_a}{2}}$.
 We have to do the dichotomy up to the square of $p_i6^{\frac{n_a}{2}}$ to recover the exact value of $x_{i}$ in the game $G_i(v_1,\dots,v_{i-1})$.
 Thus the bound on the denominator of $v_{i+1}$ is $p_{i+1} = p_i^2 6^{n_a}$.
 That is $p_{i+1} = 6^{2(2^{i+1}-1)n_a}6^{n_a} = 6^{(2^{i+2} -1)n_a} $, which proves the induction hypothesis.
 Since we do a dichotomy up to a precision $p_{i+1}$, the number of calls is clearly $\log(p_{i+1})$. 
  
 In conclusion, the number of calls to $G_k$ is  $$\displaystyle{\prod_{i=0}^{k-1} \log(6^{(2^{i+1} -1)n_a}) \leq 2^{k^2}\log(6)^{k}n_{a}^k}.$$
 Since solving a game $G_k$ can be done in linear time the total complexity is in $O(n^{k+1})$.
\end{proof}

\emph{Acknowledgements}  
This research was supported by grant ANR 12 MONU-0019 (project MARMOTE). Thanks to Yannis Juglaret for being so motivated to learn about SSGs and to Luca de Feo for insights on rationals and their representations.

\bibliographystyle{plain} 
\bibliography{acyclic_ssg.bib}

\end{document}